\renewcommand*{\P}{\mathbb{P}}
\newcommand*{\VaR}{\operatorname{VaR}}
\newcommand*{\ES}{\operatorname{ES}}
\newcommand{\E}{\mathbb{E}}
\newcommand{\id}{\mathds{1}}
\renewcommand{\ge}{\geqslant}
\renewcommand{\le}{\leqslant}
\renewcommand{\leq}{\leqslant}
\renewcommand{\epsilon}{\varepsilon}
\newcommand{\rd}{\mathrm{d}}
\theoremstyle{thmstyleone}%
\newtheorem{Theorem}{Theorem}
\newtheorem{Proposition}[Theorem]{Proposition}%
\newtheorem{Corollary}[Theorem]{Corollary}%
\newtheorem{Lemma}[Theorem]{Lemma}%
\theoremstyle{thmstyletwo}%
\newtheorem{Remark}{Remark}%
\theoremstyle{thmstylethree}%
\newtheorem{Definition}{Definition}%
\begin{document}

\title[Robust risk evaluation of joint life insurance under dependence uncertainty]{Robust risk evaluation of joint life insurance\\ under dependence uncertainty}


\author*[1]{\fnm{Takaaki} \sur{Koike}}\email{takaaki.koike@r.hit-u.ac.jp}



\affil*[1]{\orgdiv{Graduate School of Economics}, \orgname{Hitotsubashi University}, \orgaddress{\street{2-1, Naka}, \city{Kunitachi}, \postcode{186-8601}, \state{Tokyo}, \country{Japan}}}




\abstract{
Dependence among multiple lifetimes is a key factor for pricing and evaluating the risk of joint life insurance products. The dependence structure can be
exposed to model uncertainty when available data and information are limited.
We address robust pricing and risk evaluation of joint life insurance products against dependence uncertainty between two lifetimes. 
We first show that, for some class of standard contracts, the risk evaluation based on a distortion risk measure is monotone with respect to the concordance order of the underlying copula.
Based on this monotonicity, we then study the most conservative and anti-conservative risk evaluations for this class of contracts. 
We prove that the bounds for the mean, Value-at-Risk and Expected Shortfall are computed by combinations of linear programs when the uncertainty set is defined by a norm-ball centered around a reference copula.
Our numerical analysis reveals that the sensitivity of the risk evaluation against the choice of the copula differs depending on the risk measure and the type of the contract, and our proposed bounds can improve the existing bounds based on the available information.
}

\keywords{
Ambiguity,
Copula,
Dependence uncertainty,
Insurance pricing,
Model risk,
Risk measure
}



\maketitle

\section{Introduction}\label{sec:introduction}

Insurance and pension products covering multiple insureds, typically a husband and wife in a family, serve an important social need. 
For a joint life insurance product involving two insureds, 
let $X\sim F$ and $Y\sim G$ be their (continuous) lifetimes, respectively, at the start of the contract.
The first-to-die life insurance, for example, provides a death benefit at time $\min(X,Y)$. 
The second-to-die life insurance, also known as the survivorship life insurance,  pays out at time $\max(X,Y)$.
Pricing and risk evaluation of such products involve modeling not only each of $X$ and $Y$ but also the dependence between them.
Let $C$ be the survival copula of $(X,Y)$, that is, the cumulative distribution function (cdf) of $(U,V)=(\bar F(X),\bar G(Y))$, where $\bar F=1-F$ and $\bar G=1-G$.
The reader is referred to~\citet{nelsen2006introduction} for an introduction to copula theory.
Copulas are a prevalent tool for modeling joint life insurance products; see, for example,~\citet{frees1996annuity,youn1999statistical,carriere2000bivariate,youn2001pricing,shemyakin2006copula,dufresne2018age,gobbi2019joint}.

Although independence between $X$ and $Y$ is a common assumption as in~\citet{dickson2019actuarial}, it can be unrealistic due to, for example, the shared lifestyle, common disasters affecting a family and the so-called broken-heart syndrome~\citep{parkes1969broken}, all of which lead to positive dependence among lifetimes in a family.
The reader is referred to~\citet{denuit1999multilife,youn1999statistical,hougaard2000analysis,denuit2001measuring,spreeuw2006types,spreeuw2013investigating,lu2017broken}
for further analyses and discussions on the lifetime dependence.
Although some advanced models are available, the dependence structure is typically more exposed to model uncertainty compared with the marginal distributions due to, for example, censoring of data, complicated asymmetric and non-exchangeable dependence structure between lifetimes and non-stationarity of the model over a long policy term; see also~\citet{kaas2009worst,embrechts2013model,bignozzi2015reducing,bernard2017risk,liu2017collective,ruschendorf2024model} for risk analyses under dependence uncertainty.
In such a case, over- and under-estimation of the price and the risk are of main concerns from the viewpoints of price competition and conservative risk evaluation.
Moreover, if some assumption is imposed on the dependence among lifetimes in actuarial practice, actuaries may be concerned with whether this assumption tends to over-estimate or under-estimate the risk for a specific insurance product.

Motivated by these concerns, we address the calculation of the bounds on the price and risk of a joint life insurance product under various types of dependence uncertainty between two lifetimes.
Let $L$ be the random present value of the expense of a contract for the insurer.
We quantify the price or risk of this contract by $\varrho(L)$, where $\varrho$ is a distortion risk measure, which includes the mean $\mathbb{E}[L]$, \emph{Value-at-Risk (VaR)}: $\operatorname{VaR}_\alpha(L)=\inf\{l \in \mathbb{R}: F_L(l)\ge \alpha\}$ and \emph{Expected Shortfall (ES)}: $\operatorname{ES}_\alpha(L)=(1/(1-\alpha))\int_{\alpha}^1 \operatorname{VaR}_\beta(L)\rd \beta$ for $\alpha \in (0,1)$ typically close to $1$. 
We call $\mu(L)=\mathbb{E}[L]$ the price of this contract.
To clarify the point of this study, we assume that marginal distributions of $X$ and $Y$ are known, and the survival copula $C$ of $(X,Y)$ is left unspecified within a class of copulas $\mathcal D\subseteq \mathcal C$ called the \emph{uncertainty set}.
Here we denote by $\mathcal C$ the set of all bivariate copulas.
Under this assumption, the quantity of interest $\varrho(L)$ is a function of $C$, denoted by $\varrho(C)$ with abuse of notation, and our goal is to calculate
\begin{align*}
\underline \varrho(\mathcal D)=\inf\{\varrho(C): C\in \mathcal D\}\quad\text{and}\quad
\overline \varrho(\mathcal D)=\sup\{\varrho(C): C\in \mathcal D\}
\end{align*}
either analytically or numerically.
Calculating these bounds is also relevant to quantifying model risk as studied in~\citet{barrieu2015assessing}.

In Section~\ref{sec:preliminaries}, we present the general form of joint life insurance contracts considered in this paper, which covers many standard ones as listed in Appendix~\ref{sec:contracts}.
It is shown in~\citet{denuit1999multilife,denuit2001measuring} that the price of some standard contracts is monotone with respect to (w.r.t.) what is called the \emph{concordance order} on $\mathcal C$, and thus, for such contracts, the bounds on the price over $\mathcal C$ are attained by the well-known \emph{Fr{\'e}chet-Hoeffding bounds}~\citep{nelsen2006introduction}.
The objective of our study is to improve and extend this result in various directions.
Notably, since the price $\mu(L)$ does not quantify the extreme risk of the payoff $L$, we also consider distortion risk measures such as VaR and ES. 
In Section~\ref{sec:monotonicity:distortion}, we show monotonicity of distortion risk measures w.r.t. the concordance order.
Since $\mathcal C$ may be too large and the attaining copulas may be unrealistic, this monotonicity property allows us to improve bounds on $\varrho$ over various subsets of $\mathcal C$ which reflect partial information available on the dependence between $X$ and $Y$; see Remark~\ref{rem:tankov}.
Since these improved bounds are still not practical in typical situations, Section~\ref{sec:uncertainty:level} considers the collection of copulas whose distance from the prescribed \emph{reference copula} $C^{\operatorname{ref}}\in \mathcal C$ is no larger than $\epsilon>0$.
Taking this set corresponds to the situation where the underlying copula of $(X,Y)$ is specified by $C^{\operatorname{ref}}$ with the level of uncertainty $\epsilon$.
This uncertainty set may be useful when, for example, 
the copula $C^{\operatorname{ref}}$ is taken from the literature due to the scarcity of available data, but $C^{\operatorname{ref}}$ is fitted to the population with different attributes, such as generation and geographic region.
We show that these bounds can be derived by combinations of linear programs when the distance is measured by the $\mathcal L^1$ or $\mathcal L^\infty$-distance between two copulas evaluated on a relevant region.
 In Section~\ref{sec:numerical}, we numerically demonstrate the benefits of these bounds for illustrating the sensitivity of the risk evaluation against dependence uncertainty, and for improving the existing bounds based on the available information.
 
The structure of this paper is as follows. 
In Section~\ref{sec:preliminaries}, we present notation and the general form of joint life insurance contracts to be considered in this paper.
Section~\ref{sec:monotonicity:distortion} shows monotonicity of distortion risk measures for these contracts w.r.t. the concordance order of the underlying copula.
Section~\ref{sec:uncertainty:level} shows bounds for the case when $\mathcal D$ is the set of copulas close to a prespecified reference copula.
In Section~\ref{sec:numerical}, we conduct numerical experiments to compare various bounds.
Finally, we conclude this study in Section~\ref{sec:conclusion} with potential directions for future research.
We defer all the proofs to Appendix~\ref{sec:proofs}.

\section{Preliminaries}\label{sec:preliminaries}

This section introduces notation and the general form of standard joint life insurance contracts to be considered in this paper.

For continuous lifetimes $X\sim F$ and $Y\sim G$ of two insureds at the start of the contract, let
$T_{\wedge}=\operatorname{min}(X,Y)$ and $T_{\vee}=\operatorname{max}(X,Y)$.
For $t\ge 0$, we have that 
\begin{align*}
  \mathbb{P}(T_{\wedge}\ge t)&= C\left(
    \bar F(t),\bar G(t)
    \right),\\
       \mathbb{P}(T_{\vee}\ge t)&=\bar F(t)+\bar G(t)-
         C\left(
    \bar F(t),\bar G(t)
    \right),
\end{align*}
where $C$, the survival copula of $(X,Y)$, is the cdf of $(\bar F(X),\bar G(Y))$.
Let $K_X=\lfloor X \rfloor$, $K_Y=\lfloor Y \rfloor$,  $K_{\wedge}=\lfloor T_{\wedge} \rfloor$ and $K_{\vee}=\lfloor T_{\vee} \rfloor$ be curtate lifetimes, where $\lfloor \cdot \rfloor$ is the floor function.
For every $t \in \mathbb{N}_0=\{0,1,\dots\}$, we have that
\begin{align*}
    \mathbb{P}(K_{\wedge}\ge t)=\mathbb{P}(T_{\wedge}\ge t)\quad\text{and}\quad
    \mathbb{P}(K_{\vee}\ge t)=\mathbb{P}(T_{\vee}\ge t),
\end{align*}
and thus the laws of $K_{\wedge}$ and $K_{\vee}$ are completely determined by $\bar F$, $\bar G$ and $C$.

Let $L$ be a $[0,\infty)$-valued random variable, interpreted as the present value of the expense of a contract for the insurer.
In this paper, we consider contracts such that the payoff $L$ is determined by $(K_{\wedge},K_{\vee})$.
To simplify the setting, we assume that quantities other than $(K_{\wedge},K_{\vee})$, such as an effective (annual) rate of interest (and thus the discount factor), are non-random constants.
This form of payoff is admitted for many standard contracts whose payments are on annual basis; see Appendix~\ref{sec:contracts} for the list of such contracts.

We quantify the risk of a contract with payoff $L$ by the \emph{distortion risk measure}:
\begin{align*}
\varrho_h(L)=\int_0^{\infty} h(\mathbb{P}(L>x))\rd x,
\end{align*}
provided that it is well-defined, where
the \emph{distortion function} $h:[0,1]\rightarrow [0,1]$ is an increasing function satisfying $h(0)=0$ and $h(1)=1$.
Distortion risk measures are axiomatically characterized in the context of insurance pricing; see~\citet{wang1997axiomatic}.
Examples of distortion risk measures include the mean $\mu(L)=\mathbb{E}[L]$ with $h^{\mathbb{E}}(\beta)=\beta$, VaR with $h_\alpha^{\operatorname{VaR}}(\beta)=\id_{\{\beta > 1-\alpha\}}$ for $\alpha \in (0,1)$ and ES with  $h_\alpha^{\operatorname{ES}}(\beta)=\min(1,\beta/(1-\alpha))$ for $\alpha \in (0,1)$.

As $\varrho_h(L)$ is a function of $C$ in our setting, we denote it by $\varrho_h(C)$.
For two copulas $C_1,C_2\in \mathcal C$, we say that $C_2$ is more \emph{concordant} than $C_1$, denoted by $C_1 \preceq C_2$, if $C_1(u,v)\le C_2(u,v)$ for all $(u,v)\in [0,1]^2$.
As shown in~\citet{denuit1999multilife,denuit2001measuring}, the price $C\mapsto \mu(C)$ is  monotone w.r.t.~$\preceq$ for many standard contracts; see Appendix~\ref{sec:contracts}.

\section{Monotonicity of distortion risk measures}\label{sec:monotonicity:distortion}

In this section, we show that the risk measured by $\varrho_h(C)$ is monotone w.r.t. $\preceq$, the order on the degree of dependence between $X$ and $Y$, when the payoff is of a specific form. To this end, we first define \emph{monotonicity} of a payoff $L$ of a contract.
An analogous notion for the case of contracts with a single insured can be found in ~\citet{pichler2014insurance}.
Note that the terms `increasing' and `decreasing' are used in a non-strict sense throughout the paper.

\begin{Definition}
[Monotonicity of payoff]
    A random payoff $L$ is called monotone if $L=g(K_{\wedge})$ or $L=g(K_{\vee})$ for some monotone function $g:\mathbb{N}_0\rightarrow[0,\infty)$.
    In particular, if $L=g(K)$ for an increasing (decreasing) function $g$, where $K$ is $K_{\wedge}$ or $K_{\vee}$, then we say that $L$ is increasing (decreasing) in $K$.
\end{Definition}

This class of contracts covers many standard ones described in Appendix~\ref{sec:contracts}, such as joint-life and last-survivor insurances and annuities.

The next proposition provides formulas on $\varrho_h(L)$ for monotone payoffs, from which we also show that $C\mapsto \varrho_h(C)$ is monotone w.r.t.~$\preceq$.
In the following, a set $\mathcal S\subseteq[0,1]^2$ is called increasing if, for every $(u_1,v_1),(u_2,v_2)\in \mathcal S$, either ``$u_1\le u_2$ and $v_1\le v_2$'' or ``$u_1\ge u_2$ and $v_1\ge v_2$'' holds.

\begin{Proposition}[Formulas and monotonicity of distortion risk measures]\label{prop:monotonicity:distortion}
Suppose that $L$ is monotone of the form $L=g(K)$. Then the following hold.
\begin{enumerate}[label=(\roman*)]
\item\label{prop:item:i} There exist a sequence $\mathbf{z}=(z_m)_{m\in\mathbb{N}_0}$, $z_m\in [0,\infty)$, $(A_m,B_m)\in \mathbb{R}\times \{-1,1\}$, $m\in \mathbb{N}$, and an increasing set $\mathcal S=\{(u_m,v_m):{m\in \mathbb{N}}\}$, $(u_m,v_m)\in [0,1]^2$, $u_1\ge u_2 \ge \cdots$, such that 
\begin{align}\label{eq:canonical:form:rho:h}
\varrho_h(C)=z_0+\sum_{m\in\mathbb{N}}z_m\, h(A_m+B_m C(u_m,v_m)),
\end{align}
where $\mathbf{z}$ and $\mathcal S$ are  determined by $(g,F,G)$, and
$(A_m,B_m)\in \mathbb{R}\times \{-1,1\}$, $m\in \mathbb{N}$, are determined by $\mathcal S$ given by 
\begin{align*}
(A_m,B_m)=\begin{cases}
(0,1),&\text{ if $g$ is increasing and $K=K_\wedge$},\\\displaystyle
(u_m+v_m,-1),&\text{ if $g$ is increasing and $K=K_\vee$},\\\displaystyle
(1,-1),&\text{ if $g$ is decreasing and $K=K_\wedge$},\\\displaystyle
(1-u_m-v_m,1),&\text{ if $g$ is decreasing and $K=K_\vee$}.\\
\end{cases}
\end{align*}
\item\label{prop:item:ii} If $L$    is increasing (decreasing) in~$K_{\wedge}$, then $\varrho_h$ is increasing (decreasing) w.r.t.~$\preceq$.
   \item\label{prop:item:iii}  If $L$ is increasing (decreasing) in~$K_{\vee}$, then $\varrho_h$ is decreasing (increasing) w.r.t.~$\preceq$.
   \item\label{prop:item:iv} Suppose that
   human lifespan is limited, that is, 
   there exist $\omega_X,\omega_Y >0$ such that $X\le \omega_X$ and $Y\le \omega_Y$ almost surely.
   Then~\ref{prop:item:i} holds with a finite sum, that is, there exist $\bar m\in \mathbb{N}$, $z_0,z_m\in [0,\infty)$, $(A_m,B_m)\in \mathbb{R}\times \{-1,1\}$ and $(u_m,v_m)\in [0,1]^2$ for  $m=1,\dots,\bar m$ such that $u_1\ge  \cdots \ge u_{\bar m}$, $v_1\ge  \cdots \ge v_{\bar m}$ and that 
   \begin{align}\label{eq:canonical:form:rho:h:finite}
    \varrho_h(C)=z_0+\sum_{m=1}^{\bar m}z_m\, h(A_m+B_m C(u_m,v_m)),
\end{align}
where $\bar m$, $\mathbf{z}$ and $\mathcal S$ are determined by $(g,F,G)$, and
$(A_m,B_m)\in \mathbb{R}\times \{-1,1\}$ are given in~\ref{prop:item:i}.
    \end{enumerate}
\end{Proposition}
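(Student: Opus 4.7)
The plan is to compute $\varrho_h(L)$ directly from the distortion integral using the identities $\mathbb{P}(K_\wedge \geq k) = C(\bar F(k), \bar G(k))$ and $\mathbb{P}(K_\vee \geq k) = \bar F(k) + \bar G(k) - C(\bar F(k), \bar G(k))$ for $k \in \mathbb{N}_0$ recorded in Section~\ref{sec:preliminaries}. Because $L = g(K)$ for monotone $g$ and integer-valued $K$, the random variable $L$ is supported on at most countably many nonnegative values; listing its distinct values in increasing order as $0 \leq \ell_0 < \ell_1 < \ell_2 < \cdots$, the decreasing step function $x \mapsto \mathbb{P}(L > x)$ equals $\mathbb{P}(L \geq \ell_m)$ on each interval $[\ell_{m-1}, \ell_m)$, so using $h(1) = 1$,
\begin{align*}
\varrho_h(L) = \ell_0 + \sum_{m \geq 1}(\ell_m - \ell_{m-1})\, h(\mathbb{P}(L \geq \ell_m)),
\end{align*}
which already has the shape of~\eqref{eq:canonical:form:rho:h} with $z_0 = \ell_0$ and $z_m = \ell_m - \ell_{m-1} \geq 0$, both determined by $g$ alone.

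Next I would identify $\mathbb{P}(L \geq \ell_m)$ as an affine function of a single evaluation of $C$. Because $g$ is monotone, the level event $\{L \geq \ell_m\}$ is either $\{K \geq k_m\}$ (if $g$ is increasing) or $\{K \leq k_m - 1\}$ (if $g$ is decreasing) for a unique integer threshold $k_m$. Setting $(u_m, v_m) = (\bar F(k_m), \bar G(k_m))$ and substituting into the two survival formulas above yields
\begin{align*}
\mathbb{P}(L \geq \ell_m) = \begin{cases} C(u_m, v_m), & g \text{ increasing and } K = K_\wedge, \\ u_m + v_m - C(u_m, v_m), & g \text{ increasing and } K = K_\vee, \\ 1 - C(u_m, v_m), & g \text{ decreasing and } K = K_\wedge, \\ 1 - u_m - v_m + C(u_m, v_m), & g \text{ decreasing and } K = K_\vee, \end{cases}
\end{align*}
matching the four $(A_m, B_m)$ patterns stated in~\ref{prop:item:i}. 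Since $\bar F$ and $\bar G$ are nonincreasing, the sequence $(u_m, v_m)$ is simultaneously monotone in $m$ in both coordinates, so after a harmless reindexing to enforce $u_1 \geq u_2 \geq \cdots$, the set $\mathcal{S}$ is increasing in the sense defined prior to the proposition.

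Given~\ref{prop:item:i}, parts~\ref{prop:item:ii} and~\ref{prop:item:iii} follow by reading off the sign of each summand's dependence on $C$: in the first and fourth cases the argument of $h$ is pointwise nondecreasing in $C$, while in the second and third it is nonincreasing; since $h$ is nondecreasing and $z_m \geq 0$, these monotonicities transfer to $\varrho_h$ under the concordance order $\preceq$. Part~\ref{prop:item:iv} is then immediate: when $X \leq \omega_X$ and $Y \leq \omega_Y$ almost surely, both $K_\wedge$ and $K_\vee$ are bounded above by a finite integer, so $L$ attains only finitely many values and the sum truncates to a finite one. The main obstacle I expect is the bookkeeping for the decreasing-$g$ cases, where the event $\{L \geq \ell_m\}$ is naturally expressed as $\{K \leq k_m - 1\}$ and must be converted to the complement of $\{K \geq k_m\}$ to land on the affine form $1 - C(u_m, v_m)$ or $1 - u_m - v_m + C(u_m, v_m)$; one must simultaneously reverse the indexing so that the resulting $(u_m, v_m)$ respect the convention $u_1 \geq u_2 \geq \cdots$.
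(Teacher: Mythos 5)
Your proposal is correct and follows essentially the same route as the paper: both express $\varrho_h(L)$ as $z_0+\sum_m z_m\,h(\mathbb{P}(L\ge \ell_m))$ over the jump points of the payoff and then identify each survival probability as $A_m+B_mC(u_m,v_m)$ via $\mathbb{P}(K_\wedge\ge k)=C(\bar F(k),\bar G(k))$ and $\mathbb{P}(K_\vee\ge k)=\bar F(k)+\bar G(k)-C(\bar F(k),\bar G(k))$, after which (ii)--(iv) are read off exactly as you describe. The one wrinkle you correctly flag is that for decreasing $g$ with unbounded $K$ the increasing enumeration $\ell_0<\ell_1<\cdots$ need not exist (the values may decrease to an unattained infimum), so the layer-cake sum must be telescoped from the top with the survival probability evaluated at the upper endpoint of each interval of constancy --- which is precisely what the paper does by working with $\underline k(x)=\min\{k:\ell_k\le x\}$ and setting $z_m=\ell_{k_{m-1}}-\ell_{k_m}$.
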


Several remarks on Proposition~\ref{prop:monotonicity:distortion} are in order.
First, $\mathbf{z}$, $\bar m$ and $\mathcal S$ are computed directly from $(g,F,G)$; see the proof of this proposition in Appendix~\ref{sec:proofs}.
Second, the assumption of limited lifespan in Proposition~\ref{prop:monotonicity:distortion}~\ref{prop:item:iv}  is statistically validated in some studies, such as~\citet{einmahl2019limits}.
Finally, due to the monotonicity of $\varrho_h$, bounds on $\varrho_h(C)$ can be computed based on known bounds on copulas w.r.t. $\preceq$.
The most well-known but widest bounds are given by the \emph{Fr{\'e}chet-Hoeffding inequality}: $W\preceq C\preceq M$ for every $C\in \mathcal C$, where $W(u,v)=\max(0,u+v-1)$, $(u,v)\in[0,1]^2$, is called the \emph{counter-monotonic} copula, and $M(u,v)=\min(u,v)$, $(u,v)\in[0,1]^2$ , is called the \emph{comonotonic copula}; see~\citet{nelsen2006introduction} for details.
Note that this inequality ensures that $A_m+B_m C(u_m,v_m)\in[0,1]$ for every $m\in\mathbb{N}$.
Under the assumption of \emph{positive quadrant/orthant dependence (PQD/POD)}, the lower bound becomes $\Pi \preceq C$, where $\Pi(u,v)=uv$, $(u,v)\in[0,1]^2$, is the \emph{independence copula}.
Further improved bounds may also be applicable as in the following remark.

\begin{Remark}[Improved Fr\'echet bounds]\label{rem:tankov}
We can extend the domain of the map $C\mapsto \varrho_h(C)$ from $\mathcal C$ to $\mathcal Q$, the set of all \emph{quasi-copulas}~\citep{alsina1993characterization,genest1999characterization} by simply replacing $C$ with $Q$ in~\eqref{eq:canonical:form:rho:h}.
We also write $Q_1\preceq Q_2$ for $Q_1,Q_2\in\mathcal Q$ if $Q_1(u,v)\le Q_2(u,v)$ for all $(u,v)\in[0,1]^2$.
If there exist $\underline Q,\overline Q\in\mathcal Q$ such that $\underline Q\preceq C\preceq \overline Q$ for every $C\in\mathcal D$, then the crude bounds $\varrho_h(\underline Q)$ and $\varrho_h(\overline Q)$ are available.
The functions $\underline Q$ and $\overline Q$ are known when, for example, $\mathcal D=\mathcal C_\rho(r):=\{C\in\mathcal C: \rho(C)=r\}$, the set of all copulas with a known value $r\in[\rho(W),\rho(M)]$ of a dependence measure $\rho:\mathcal C\rightarrow \mathbb{R}$ such as the \emph{Spearman's rho} and \emph{Kendall's tau};~see~\citet{nelsen2001bounds,tankov2011improved}.
For this type of uncertainty set, risk bounds for some classes of aggregation functions are available in~\citet{kaas2009worst}.

Another relevant case is $\mathcal D=\mathcal C_{\mathcal S'}(Q):=\{C\in \mathcal C: C(u,v)=Q(u,v)\text{ for all }(u,v)\in \mathcal S'\}$,
     the collection of all copulas whose values are specified via a quasi-copula $Q\in \mathcal Q$ only on a compact subset $\mathcal S'$ of $[0,1]^2$.    
This set may be useful when, for example, values of the underlying copula are left unspecified on the upper tail region due to the scarcity of available data on young generations.
Then~\citet{tankov2011improved} derived functions $A^{\mathcal S',Q},B^{\mathcal S',Q}\in \mathcal Q$ such that $B^{\mathcal S',Q}\preceq C \preceq A^{\mathcal S',Q}$ for every $C \in \mathcal C_{\mathcal S'}(Q)$.
For our purposes, let $\mathcal S'\subseteq \mathcal S$. 
Then it is known that $B^{\mathcal S',Q} \in \mathcal C_{\mathcal S'}(Q)$.
Moreover, one can numerically check the attainability of $A^{\mathcal S',Q}$; see~\citet{mardani2010bounds} and~\citet[][Theorem~2.3]{sadooghi2013sharp}. 
\end{Remark}

\section{Bounds with a given level of uncertainty}\label{sec:uncertainty:level}

In this section, we study bounds on $\varrho_h(C)$ over a ball around some prespecified copula.
Throughout this section, we assume that human lifespan is limited, and thus $\varrho_h(C)$ admits the representation~\eqref{eq:canonical:form:rho:h:finite}.

Since $\varrho_h(C)$ depends only on the values of the underlying copula $C$ on $\mathcal S$, we measure the distance between two copulas over the finite set $\mathcal S$.
Let $\|\cdot\|$ be any norm on $\mathbb{R}^{\bar m}$.
Then we consider the following uncertainty set:
\begin{align}\label{eq:uncertainty:set}
    \mathcal C_{\mathcal S,\epsilon}(C^{\operatorname{ref}})=\left\{C\in \mathcal C: 
    \|
    (C(u_m,v_m)-C^{\operatorname{ref}}(u_m,v_m))_{m\in\{1,\dots,\bar m\}}
    \|
    \le \epsilon\right\},
\end{align}
where we call $C^{\operatorname{ref}}\in \mathcal C$ the \emph{reference copula} and $\epsilon>0$ the \emph{level of uncertainty}.

\begin{Remark}[Choice of $\epsilon$]\label{rem:choice:epsilon}
We provide some insight on the choice of $\epsilon$, which is an adapted version of the discussion in Remark 3.2 of~\citet{bernard2024robust} to our context.
For a subset of copulas $\mathcal D\subseteq \mathcal C$, define
\begin{align}\label{eq:bar:epsilon}
    \overline \epsilon(\mathcal D)=\sup\{\|
    (C(u_m,v_m)-C^{\operatorname{ref}}(u_m,v_m))_{m\in\{1,\dots,\bar m\}}
    \| :C \in \mathcal D \}.
\end{align} 
If there is a set of candidate copulas 
$\mathcal{D}$ in mind, then one could choose $\overline \epsilon(\mathcal D)$ as the level of uncertainty.
The quantity $\overline \epsilon (\mathcal D)$ can be calculated straightforwardly when, for example, $\mathcal D$ is a finite set.
If $\preceq$ is a total order on $\mathcal D^{\ast}=\mathcal D\cup\{C^{\operatorname{ref}}\}$ and there exist  maximal and minimal elements $\overline C$ and $\underline C$ on $\mathcal D^{\ast}$, then the supremum in $\overline \epsilon(\mathcal D)$ is attainable at either $\overline C$ or $\underline C$.
This situation may typically be the case when $\mathcal D^{\ast}$ is taken from a class of some parametric copulas.
We also remark that the uncertainty set~\eqref{eq:uncertainty:set} depends not only on $\epsilon$ but also on $\mathcal S$, which is determined by $g$, $F$ and $G$ as stated in Proposition~\ref{prop:monotonicity:distortion}.
Consequently, the scale and appropriate choice of the level of uncertainty may vary across contracts and marginal lifetime distributions.
In light of this dependency, the bound $\overline \epsilon(\mathcal C)$ is a useful quantity for gauging the scale of the level of uncertainty.
We will discuss how to compute $\overline \epsilon(\mathcal C)$ in Sections~\ref{sec:reformulations} and~\ref{sec:linear:program}.
\end{Remark}

\subsection{Reformulations in mathematical programming}\label{sec:reformulations}

The next result shows that the uncertainty set $\mathcal C_{\mathcal S,\epsilon}(C^{\operatorname{ref}})$ yields a convex feasible region on $\mathbb{R}^{\bar m}$.

\begin{Proposition}[Reformulation of $\overline \varrho_h(\mathcal C_{\mathcal S,\epsilon}(C^{\operatorname{ref}}))$ and $\underline \varrho_h(\mathcal C_{\mathcal S,\epsilon}(C^{\operatorname{ref}}))$]\label{prop:bounds:rho:h}
     Suppose that $L$ is monotone.
     Then
      $\overline \varrho_h(\mathcal C_{\mathcal S,\epsilon}(C^{\operatorname{ref}}))$ and $\underline \varrho_h(\mathcal C_{\mathcal S,\epsilon}(C^{\operatorname{ref}}))$ are given by:
     \begin{align}
        \label{eq:worst:varrho:ref} \overline \varrho_h(\mathcal C_{\mathcal S,\epsilon}(C^{\operatorname{ref}}))=\sup\left\{z_0 + \sum_{m=1}^{\bar m} z_m\,h(r_m):(r_1,\dots,r_{\bar m})\in \mathcal R(\epsilon)\right\},\\
        \label{eq:best:varrho:ref}
         \underline \varrho_h(\mathcal C_{\mathcal S,\epsilon}(C^{\operatorname{ref}}))=\inf\left\{z_0 + \sum_{m=1}^{\bar m }z_m\,h(r_m):(r_1,\dots,r_{\bar m})\in \mathcal R(\epsilon) \right\},
     \end{align}
     for a non-empty, compact and convex set  $\mathcal R(\epsilon)\subset [0,1]^{\bar m}$. 
     Moreover, $\mathcal R(\epsilon)$
     is  of the form:
     \begin{align*}
         \mathcal R(\epsilon)=\bigcap_{j=1}^{4\bar m - 1}\{\mathbf{r}\in \mathbb{R}^{\bar m}: K_j(\mathbf{r})\le 0\},
     \end{align*}
      for convex functions $K_j:\mathbb{R}^{\bar m}\rightarrow \mathbb{R}$, $j=1,\dots,4\bar m - 1$, where $K_j$, $j=1,\dots,4\bar m-2 $, are linear functions and $K_{4\bar m - 1}(\mathbf{r})= \|\mathbf{r}-\mathbf{r}^{\operatorname{ref}}\|-\epsilon$
      with $\mathbf{r}^{\operatorname{ref}}=(r_1^{\operatorname{ref}},\dots,r_{\bar m}^{\operatorname{ref}})^\top$ given by 
\begin{align*}
r_m^{\operatorname{ref}}=A_m + B_m C^{\operatorname{ref}}(u_m,v_m),\quad m=1,\dots,\bar m.
\end{align*}
\end{Proposition}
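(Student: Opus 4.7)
The plan is to reduce the optimization of $\varrho_h(C)$ over $\mathcal C_{\mathcal S,\epsilon}(C^{\operatorname{ref}})$ to an optimization over the vector of values $(C(u_m,v_m))_{m=1,\dots,\bar m}$, and then characterize the image of $\mathcal C_{\mathcal S,\epsilon}(C^{\operatorname{ref}})$ under this projection as a convex subset of $[0,1]^{\bar m}$ defined by finitely many affine inequalities plus one norm-ball constraint. First, I invoke Proposition~\ref{prop:monotonicity:distortion}~\ref{prop:item:iv} to write $\varrho_h(C)=z_0+\sum_{m=1}^{\bar m}z_m h(A_m+B_m C(u_m,v_m))$, so $\varrho_h$ depends on $C$ only through the vector $\mathbf{c}(C)=(C(u_m,v_m))_{m=1,\dots,\bar m}$. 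After the affine change of variables $r_m=A_m+B_m c_m$ (with $B_m\in\{-1,1\}$, $A_m\in[0,1]$), I let $\mathcal R(\epsilon)$ denote the image of $\mathcal C_{\mathcal S,\epsilon}(C^{\operatorname{ref}})$ in $\mathbb{R}^{\bar m}$ under this transformation. The two optimization formulas~\eqref{eq:worst:varrho:ref}--\eqref{eq:best:varrho:ref} then follow immediately.

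Next I establish the explicit description of $\mathcal R(\epsilon)$. The main claim to prove is:
\begin{equation*}
\mathbf{c}\in [0,1]^{\bar m} \text{ is the vector of values of some copula at $\mathcal S$} \iff \mathbf{c} \text{ satisfies certain $4\bar m-2$ affine inequalities}.
\end{equation*}
Necessity is straightforward: using $u_1\ge\cdots\ge u_{\bar m}$ and $v_1\ge\cdots\ge v_{\bar m}$ and the standard copula properties, any vector $(C(u_m,v_m))_m$ must satisfy (a) the Fr\'echet--Hoeffding bounds $\max(0,u_m+v_m-1)\le c_m \le \min(u_m,v_m)$ for $m=1,\dots,\bar m$ (which amounts to $2\bar m$ linear constraints), (b) monotonicity $c_m\ge c_{m+1}$ for $m=1,\dots,\bar m-1$, and (c) the Lipschitz condition $c_m-c_{m+1}\le (u_m-u_{m+1})+(v_m-v_{m+1})$ for $m=1,\dots,\bar m-1$; together, $4\bar m-2$ affine constraints. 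Sufficiency is the harder direction: given any $\mathbf{c}$ satisfying these inequalities, I must exhibit a copula $C$ with $C(u_m,v_m)=c_m$ for every $m$. Since $\mathcal S$ is an increasing set with decreasing coordinates, this is exactly the compatibility problem solved in~\citet{tankov2011improved} (and refined in~\citet{mardani2010bounds,sadooghi2013sharp}), where the constraints above are shown to guarantee the existence of a copula realizing the prescribed values on a finite increasing set. Invoking this result gives the converse, hence the characterization.

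Combining, $\mathcal R(\epsilon)$ is the intersection of the image of the affine transform $\mathbf{c}\mapsto(A_m+B_m c_m)_m$ of the above polytope with the closed norm ball $\{\mathbf{r}\in\mathbb{R}^{\bar m}:\|\mathbf{r}-\mathbf{r}^{\operatorname{ref}}\|\le \epsilon\}$, where $\mathbf{r}^{\operatorname{ref}}$ is the image of $\mathbf{c}(C^{\operatorname{ref}})$. Since affine transformations map polytopes to polytopes, the first $4\bar m-2$ constraints remain linear (i.e., expressible as $K_j(\mathbf r)\le 0$ for linear $K_j$), and the last constraint $K_{4\bar m -1}(\mathbf{r})=\|\mathbf r-\mathbf r^{\operatorname{ref}}\|-\epsilon\le 0$ is convex by convexity of norms. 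Non-emptiness of $\mathcal R(\epsilon)$ holds because $C^{\operatorname{ref}}\in \mathcal C_{\mathcal S,\epsilon}(C^{\operatorname{ref}})$, and compactness follows from being a closed bounded intersection in $\mathbb{R}^{\bar m}$.

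The main obstacle is the sufficiency direction in the characterization of $\mathcal R(\epsilon)$, namely showing that any tuple satisfying the affine inequalities corresponds to values of a genuine copula (not merely a quasi-copula), for which I rely on the existing extension theorems in~\citet{tankov2011improved,mardani2010bounds,sadooghi2013sharp}; the remaining steps are routine verifications of convexity, compactness, and the affine change of variables.
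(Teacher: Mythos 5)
Your proposal is correct and follows essentially the same route as the paper: reduce to the vector of copula values on $\mathcal S$, change variables via $r_m=A_m+B_mc_m$, and characterize the feasible set by the Fr\'echet--Hoeffding, monotonicity and Lipschitz constraints ($4\bar m-2$ affine inequalities) plus the norm ball, appealing to an existing compatibility theorem for prescribed values on an increasing set. The only cosmetic difference is the reference for that compatibility step (the paper cites Genest et al.\ 1999, you cite Tankov 2011 and related works), which does not change the argument.
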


The quantity $\overline \epsilon(\mathcal C)$ introduced in Remark~\ref{rem:choice:epsilon} can be reformulated analogously.

\begin{Corollary}[Reformulation of $\overline \epsilon(\mathcal C)$]\label{cor:bar:epsilon}
The quantity $\overline \epsilon(\mathcal C)$ defined in~\eqref{eq:bar:epsilon} is given by
        \begin{align}
        \label{eq:bar:epsilon:ref} \overline \epsilon(\mathcal C)=\max\left\{\|\mathbf{r}-\mathbf{r}^{\operatorname{ref}}\|:\mathbf{r}\in \mathcal R \right\},
     \end{align}
     where $\mathbf{r}^{\operatorname{ref}}=(r_1^{\operatorname{ref}},\dots,r_{\bar m}^{\operatorname{ref}})^\top$ with $r_m^{\operatorname{ref}}= C^{\operatorname{ref}}(u_m,v_m)$, $m=1,\dots,\bar m$, and
$\mathcal R$ is a non-empty, compact and convex set  $\mathcal R\subset [0,1]^{\bar m}$ of the form:
     \begin{align*}
         \mathcal R=\bigcap_{j=1}^{4\bar m-2}\{\mathbf{r}\in \mathbb{R}^{\bar m}: K_j(\mathbf{r})\le 0\},
     \end{align*}
      for some linear functions $K_j:\mathbb{R}^{\bar m}\rightarrow \mathbb{R}$, $j=1,\dots,4\bar m-2$.
\end{Corollary}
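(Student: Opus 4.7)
The plan is to apply the same change of variables as in Proposition~\ref{prop:bounds:rho:h} and to drop the single norm-ball constraint, thereby reducing $\bar\epsilon(\mathcal C)$ to the maximization of a continuous objective over the image of $\mathcal C$ under point-evaluation on $\mathcal S$. Concretely, I would introduce the evaluation map $\Phi:\mathcal C\to[0,1]^{\bar m}$ given by $\Phi(C)=(C(u_1,v_1),\ldots,C(u_{\bar m},v_{\bar m}))$, set $\mathcal R=\Phi(\mathcal C)$ and $\mathbf{r}^{\operatorname{ref}}=\Phi(C^{\operatorname{ref}})$, and observe that the definition~\eqref{eq:bar:epsilon} factors through $\Phi$, yielding $\bar\epsilon(\mathcal C)=\sup\{\|\mathbf{r}-\mathbf{r}^{\operatorname{ref}}\|:\mathbf{r}\in\mathcal R\}$.

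The heart of the argument is to identify $\mathcal R$ with a polytope in $[0,1]^{\bar m}$ cut out by $4\bar m-2$ linear inequalities. For the necessity direction, every copula $C$ forces $r_m=C(u_m,v_m)$ to satisfy: (i) the Fr\'echet--Hoeffding bounds $\max(0,u_m+v_m-1)\le r_m\le\min(u_m,v_m)$ for each $m$; (ii) monotonicity along the chain, $r_m\ge r_{m+1}$ for $m=1,\ldots,\bar m-1$, which is forced by $u_m\ge u_{m+1}$ and $v_m\ge v_{m+1}$; and (iii) the $1$-Lipschitz bound $r_m-r_{m+1}\le(u_m-u_{m+1})+(v_m-v_{m+1})$ inherited from the $2$-increasing property of copulas with uniform marginals. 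Counting gives $2\bar m+2(\bar m-1)=4\bar m-2$ linear constraints, matching the claim.

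The main obstacle is sufficiency: every $\mathbf{r}$ satisfying these inequalities must be realized as $\Phi(C)$ for some $C\in\mathcal C$. I would derive this by invoking the quasi-copula interpolation construction cited in Remark~\ref{rem:tankov}. Because the values $r_m$ respect the Fr\'echet, monotonicity and Lipschitz constraints, they extend to a quasi-copula $Q$ with $Q(u_m,v_m)=r_m$; since $\mathcal S$ is an increasing chain, the lower bound $B^{\mathcal S,Q}$ is not merely a quasi-copula but an actual copula in $\mathcal C_{\mathcal S}(Q)$, providing the desired $C$. The same attainability is a direct byproduct of the proof of Proposition~\ref{prop:bounds:rho:h}, in which exactly this feasible set is exhibited prior to intersecting with the norm ball $\{\|\mathbf{r}-\mathbf{r}^{\operatorname{ref}}\|\le\epsilon\}$. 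With necessity and sufficiency in hand, $\mathcal R$ is a non-empty (it contains $\mathbf{r}^{\operatorname{ref}}$) intersection of finitely many closed half-spaces inside $[0,1]^{\bar m}$, hence compact and convex; continuity of $\mathbf{r}\mapsto\|\mathbf{r}-\mathbf{r}^{\operatorname{ref}}\|$ then upgrades the supremum in~\eqref{eq:bar:epsilon:ref} to a maximum attained on $\mathcal R$.
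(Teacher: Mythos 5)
Your proposal is correct and follows essentially the same route as the paper: the paper's proof simply reuses the argument of Proposition~\ref{prop:bounds:rho:h} with $(A_m,B_m)=(0,1)$ and the norm-ball constraint dropped, identifying $\mathcal R$ as the polytope cut out by the Fr\'echet--Hoeffding bounds, the chain monotonicity, and the Lipschitz-type condition (the paper's conditions~\eqref{eq:cond:monotone}--\eqref{eq:cond:bounds}, totalling $4\bar m-2$ linear inequalities), with attainability from compactness and continuity of the norm. Your sufficiency step via the quasi-copula extension and the attainable lower bound $B^{\mathcal S,Q}$ on an increasing chain is just a slightly different citation path to the same realization fact the paper draws from \citet{genest1999characterization}.
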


Proposition~\ref{prop:bounds:rho:h} and Corollary~\ref{cor:bar:epsilon} reduce the problems of computing  the bounds $ \overline \varrho_h(\mathcal C_{\mathcal S,\epsilon}(C^{\operatorname{ref}}))$, $ \underline \varrho_h(\mathcal C_{\mathcal S,\epsilon}(C^{\operatorname{ref}}))$ and $\overline \epsilon(\mathcal C)$ to mathematical programming.
Note that, if $h$ is continuous, then the supremum in~\eqref{eq:worst:varrho:ref} and the infimum in~\eqref{eq:best:varrho:ref} are attainable by the extreme value theorem~\citep{Rudin1976}, 
and the maps
$\epsilon\mapsto \overline \varrho_h(\mathcal C_{\mathcal S,\epsilon}(C^{\operatorname{ref}}))$ and $\epsilon\mapsto \underline \varrho_h(\mathcal C_{\mathcal S,\epsilon}(C^{\operatorname{ref}}))$ are continuous by Berge's maximum theorem~\citep{Berge1963}.

Difficulties in computing these quantities depend on the choice of $h$ and the norm $\|\cdot\|$.
For example, convex maximization over a convex polytope, as in~\eqref{eq:worst:varrho:ref} with a convex $h$ and~\eqref{eq:bar:epsilon:ref}, is in general harder to solve than convex minimization as in~\eqref{eq:best:varrho:ref} with a convex $h$.
Nevertheless, in the next two subsections, we show that computing~\eqref{eq:worst:varrho:ref},~\eqref{eq:best:varrho:ref} and~\eqref{eq:bar:epsilon:ref} can be simplified for all distortion functions $h$ of our interest. 
In particular, when the norm is the $\mathcal L^1$-norm or the $\mathcal L^\infty$-norm, the problem is typically reduced to a combination of linear programs.

\subsection{Further reformulation for VaR and ES}\label{sec:reformulation:VaR:ES}

In this section, we show that the problems of computing~\eqref{eq:worst:varrho:ref} and~\eqref{eq:best:varrho:ref} can be further simplified for the cases of VaR and ES. In the following propositions, a sum $\sum_{m=n_1}^{n_2}z_m$, $n_1,n_2\in\mathbb{N}_0$ and $z_m\in \mathbb{R}$, is interpreted as $0$ if $n_1 > n_2$. 

First, we consider the case where $h$ is $h_\alpha^{\operatorname{VaR}}$.
In this case, $h$ has a jump at $1-\alpha$, and thus is neither convex nor concave.
Nevertheless,  the bounds $\underline{\operatorname{VaR}}_\alpha(\mathcal C_{\mathcal S,\epsilon}(C^{\operatorname{ref}}))$ and $\overline{\operatorname{VaR}}_\alpha(\mathcal C_{\mathcal S,\epsilon}(C^{\operatorname{ref}}))$ 
are attainable since~\eqref{eq:worst:varrho:ref} and~\eqref{eq:best:varrho:ref} reduce to the supremum and infimum over a subset of the finite set $\{z_0 + \sum_{m\in \mathcal I} z_m: \mathcal I\subseteq \{1,\dots,\bar m \}\}$, respectively.
Furthermore, the following proposition shows that each bound can be computed by solving convex programs at most $\bar m$ times.

\begin{Proposition}[VaR bounds on $\mathcal C_{\mathcal S,\epsilon}(C^{\operatorname{ref}})$]\label{prop:var:bounds}
Define the following quantities:
\begin{align*}
    \overline r_m &= \max\{r_m: \mathbf{r}\in\mathcal R(\epsilon)\},\\
    \underline r_m &= \min\{r_m: \mathbf{r}\in\mathcal R(\epsilon)\},\quad m\in\{1,\dots,\bar m\}.
    \end{align*}
    \begin{enumerate}[label=(\Roman*)]
\item\label{VaR:bounds:item:I} Suppose that $g$ is increasing. 
Define 
\begin{align*}
m_{\operatorname{L}}&=\min(\{m \in \{1,\dots,\bar m\}: \underline r_m \le 1-\alpha\}\cup\{\bar m + 1\}),\\
m_{\operatorname{U}}&=\max(\{m \in \{1,\dots,\bar m\}: \overline r_m > 1-\alpha\}\cup\{0\}).
   \end{align*}
              Then we have the following bounds:
\begin{align*}
    \underline{\operatorname{VaR}}_\alpha(\mathcal C_{\mathcal S,\epsilon}(C^{\operatorname{ref}}))
   & =\sum_{m=0}^{m_{\operatorname{L}}-1 }z_m, \\
    \overline{\operatorname{VaR}}_\alpha(\mathcal C_{\mathcal S,\epsilon}(C^{\operatorname{ref}}))&=
\sum_{m=0}^{m_{\operatorname{U}}}z_m.
\end{align*}

\item\label{VaR:bounds:item:II} Suppose that $g$ is decreasing. Define
    \begin{align*}
m_{\operatorname{L}}&=
\max(\{m \in \{1,\dots,\bar m\}: \underline r_m \le 1-\alpha\}\cup\{0\}),\\
m_{\operatorname{U}}&=
\min(\{m \in \{1,\dots,\bar m\}: \overline r_m > 1-\alpha\}\cup\{\bar m + 1\}).
   \end{align*}
              Then we have the following bounds:
\begin{align*}
\underline{\operatorname{VaR}}_\alpha(\mathcal C_{\mathcal S,\epsilon}(C^{\operatorname{ref}}))
    &=z_0+\sum_{m=m_{\operatorname{L}}+1}^{\bar m}z_m,\\ 
\overline{\operatorname{VaR}}_\alpha(\mathcal C_{\mathcal S,\epsilon}(C^{\operatorname{ref}}))&=
\displaystyle z_0 +\sum_{m=m_{\operatorname{U}}}^{\bar m}z_m.
\end{align*}
    \end{enumerate}

\end{Proposition}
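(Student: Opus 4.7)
The plan is to unfold the representation from Proposition~\ref{prop:monotonicity:distortion}\ref{prop:item:iv} for $h=h_\alpha^{\operatorname{VaR}}$, obtaining
\begin{align*}
\operatorname{VaR}_\alpha(C) = z_0 + \sum_{m=1}^{\bar m} z_m\,\id_{\{r_m > 1-\alpha\}}, \qquad r_m = A_m + B_m C(u_m, v_m),
\end{align*}
and then invoke Proposition~\ref{prop:bounds:rho:h} to transfer the optimization over $C \in \mathcal C_{\mathcal S,\epsilon}(C^{\operatorname{ref}})$ into the corresponding extremization of $z_0 + \sum_{m=1}^{\bar m} z_m\,\id_{\{r_m > 1-\alpha\}}$ over $\mathbf{r} \in \mathcal R(\epsilon)$.

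The key structural fact I would establish next is that $m \mapsto r_m$ is monotone along every feasible $\mathbf{r} \in \mathcal R(\epsilon)$. Proposition~\ref{prop:monotonicity:distortion}\ref{prop:item:i} supplies $u_1 \ge \cdots \ge u_{\bar m}$, and since $\mathcal S$ is increasing this also forces $v_1 \ge \cdots \ge v_{\bar m}$; hence $m\mapsto C(u_m,v_m)$ is non-increasing for any copula $C$. A short case-check on $(A_m,B_m)$ from Proposition~\ref{prop:monotonicity:distortion}\ref{prop:item:i} (observing that the $K=K_\vee$ cases reduce, up to an affine change, to the survival probability $\mathbb{P}(T_\vee\ge t_m)$ with $t_m$ increasing in $m$) then yields: $r_m$ is non-increasing in $m$ when $g$ is increasing (Case~\ref{VaR:bounds:item:I}), and non-decreasing when $g$ is decreasing (Case~\ref{VaR:bounds:item:II}).

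In Case~\ref{VaR:bounds:item:I}, monotonicity forces $\{m:r_m>1-\alpha\}$ to be an initial segment $\{1,\dots,M\}$, so $\operatorname{VaR}_\alpha = \sum_{m=0}^M z_m$ and, using $z_m\ge 0$, extremizing $\operatorname{VaR}_\alpha$ reduces to extremizing $M$. For the upper bound, $m>m_{\operatorname{U}}$ gives $r_m\le\overline r_m\le 1-\alpha$, whence $M\le m_{\operatorname{U}}$; compactness of $\mathcal R(\epsilon)$ supplies an $\mathbf{r}^*$ attaining $r^*_{m_{\operatorname{U}}}=\overline r_{m_{\operatorname{U}}}>1-\alpha$, and the monotonicity of $\mathbf{r}^*$ automatically propagates $r^*_m>1-\alpha$ to all $m\le m_{\operatorname{U}}$, so the bound is attained. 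For the lower bound, $m<m_{\operatorname{L}}$ gives $r_m\ge\underline r_m>1-\alpha$, forcing $M\ge m_{\operatorname{L}}-1$; attainability uses an $\mathbf{r}^*$ realizing $r^*_{m_{\operatorname{L}}}=\underline r_{m_{\operatorname{L}}}\le 1-\alpha$ (vacuous when $m_{\operatorname{L}}=\bar m+1$). Case~\ref{VaR:bounds:item:II} is entirely dual: the super-level set becomes a terminal segment $\{N,\dots,\bar m\}$, the roles of $m_{\operatorname{L}}$ and $m_{\operatorname{U}}$ in bounding $N$ are swapped, and the same attainability argument yields the stated formulas.

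The main obstacle I anticipate is the joint feasibility of pushing one coordinate $r_m$ to an extremum while controlling the remaining coordinates; this is exactly where monotonicity of $m\mapsto r_m$ does the work, since once the critical coordinate crosses the threshold $1-\alpha$, all other coordinates automatically lie on the correct side, bypassing any separate joint-feasibility computation. The boundary conventions ($m_{\operatorname{L}}=0$ or $\bar m+1$, $m_{\operatorname{U}}=0$ or $\bar m+1$, together with empty sums interpreted as zero) then handle the degenerate situations uniformly.
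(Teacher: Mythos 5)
Your proposal is correct and follows essentially the same route as the paper: both rest on the observation that $m\mapsto r_m$ is monotone along every feasible $\mathbf{r}\in\mathcal R(\epsilon)$ (non-increasing for $g$ increasing, non-decreasing for $g$ decreasing, which the paper reads off from the constraints in the proof of Proposition~\ref{prop:bounds:rho:h}), so that the super-level set $\{m:r_m>1-\alpha\}$ is an initial (resp.\ terminal) segment and the optimization reduces to locating the critical index, with attainability supplied by the vector realizing $\overline r_{m_{\operatorname{U}}}$ or $\underline r_{m_{\operatorname{L}}}$ exactly as you describe. No gaps.
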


By Proposition~\ref{prop:var:bounds}, calculation of each VaR bound reduces to finding a corresponding index.
This index can be found by solving convex programs at most $\bar m$ times since $\{\overline r_m\}_{m\in\{1,\dots,\bar m\}}$ and $\{\underline r_m\}_{m\in\{1,\dots,\bar m\}}$ are monotone sequences.
Regarding the sensitivity to~$\epsilon$, the bounds are generally not continuous w.r.t.~$\epsilon$.

Next, we consider the case of ES.
Since $h_\alpha^{\operatorname{ES}}$ is continuous, the supremum in~\eqref{eq:worst:varrho:ref} and the infimum in~\eqref{eq:best:varrho:ref} are both attainable.
Due to the piecewise linearity of $h_\alpha^{\operatorname{ES}}$, the bounds can be computed similarly to the case of VaR by the following proposition.

\begin{Proposition}[ES bounds on $\mathcal C_{\mathcal S,\epsilon}(C^{\operatorname{ref}})$]\label{prop:es:bounds}
\begin{enumerate}[label=(\Roman*)]
\hspace{0mm}
\item\label{ES:bounds:item:I} Suppose that $g$ is increasing. 
Define
\begin{align*}
    \mathcal R_m(\epsilon)=\{\mathbf{r}\in \mathcal R(\epsilon): r_{m+1} \le 1-\alpha \le  r_{m}\},\quad m=0,\dots,\bar m,
\end{align*}
where inequalities on $r_0$ and $r_{\bar m + 1}$ are interpreted to hold for any $\mathbf{r}\in\mathcal R(\epsilon)$.
For $m=0,\dots,\bar m$, define $\overline H_{m}$ and  $\underline H_{m}$ as the supremum and infimum of 
\begin{align}\label{eq:H:ES:I}
\mathbf{r}\mapsto \sum_{m'=0}^{m} z_{m'}  +  \sum_{m'=m+1}^{\bar m} \frac{z_{m'} }{1-\alpha}r_{m'}
\end{align}
over $\mathcal R_{m}(\epsilon)$, respectively.
\item\label{ES:bounds:item:II} Suppose that $g$ is decreasing. 
Define:
\begin{align*}
    \mathcal R_m(\epsilon)=\{\mathbf{r}\in \mathcal R(\epsilon): r_{m} \le 1-\alpha \le r_{m+1}\},\quad m=0,\dots,\bar m,
\end{align*}
where inequalities on $r_0$ and $r_{\bar m + 1}$ are interpreted to hold for any $\mathbf{r}\in\mathcal R(\epsilon)$.
For $m=0,\dots,\bar m$, define $\overline H_{m}$ and  $\underline H_{m}$ as the supremum and infimum of 
\begin{align}\label{eq:H:ES:II}
\mathbf{r}\mapsto 
z_0 + \sum_{m'=m+1}^{\bar m} z_{m'}  +  \sum_{m'=1}^{m} \frac{z_{m'} }{1-\alpha}r_{m'}
\end{align}
over $\mathcal R_{m}(\epsilon)$, respectively.
\end{enumerate}
In both cases,
\begin{align*}
\underline{\operatorname{ES}}_\alpha(\mathcal C_{\mathcal S,\epsilon}(C^{\operatorname{ref}}))&=
\min\{\underline H_{m}:m\in\{0,\dots,\bar m\}\},\\
\overline{\operatorname{ES}}_\alpha(\mathcal C_{\mathcal S,\epsilon}(C^{\operatorname{ref}}))&=
\max\{\overline H_{m}:m\in\{0,\dots,\bar m\}\},
\end{align*}
and all the bounds are attainable.
\end{Proposition}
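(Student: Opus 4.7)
The plan is to exploit two structural features: (i) the sequence $(r_m)_{m=1}^{\bar m}$, where $r_m = A_m + B_m C(u_m,v_m)$, is monotone in $m$, and (ii) the distortion $h_\alpha^{\operatorname{ES}}(\beta)=\min(1,\beta/(1-\alpha))$ is piecewise linear with a single break at $1-\alpha$. For (i), I would verify case-by-case from the explicit formulas for $(A_m,B_m)$ in Proposition~\ref{prop:monotonicity:distortion}\ref{prop:item:i} together with the ordering $u_1\ge\cdots\ge u_{\bar m}$, $v_1\ge\cdots\ge v_{\bar m}$ and the coordinatewise monotonicity of copulas: when $g$ is increasing, $r_m$ decreases in $m$ (e.g.\ $r_m=C(u_m,v_m)$ if $K=K_\wedge$, and $r_m=u_m+v_m-C(u_m,v_m)=\p(T_\vee\ge t_m)$ if $K=K_\vee$), whereas when $g$ is decreasing, $r_m$ increases in $m$ (by the complementary identities $r_m=1-C(u_m,v_m)$ and $r_m=1-u_m-v_m+C(u_m,v_m)$).

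Once (i) is in hand, I would fix any $\mathbf{r}\in\mathcal R(\epsilon)\subset[0,1]^{\bar m}$ and observe that, by monotonicity, the threshold $1-\alpha$ occupies a unique position in $(r_m)_m$, so there exists an $m\in\{0,\dots,\bar m\}$ with $\mathbf{r}\in\mathcal R_m(\epsilon)$; the endpoint conventions on $r_0$ and $r_{\bar m + 1}$ exactly cover the degenerate cases where every $r_m$ lies on one side of $1-\alpha$. On $\mathcal R_m(\epsilon)$ in case~\ref{ES:bounds:item:I}, feature (ii) yields $h_\alpha^{\operatorname{ES}}(r_{m'})=1$ for $m'\le m$ and $h_\alpha^{\operatorname{ES}}(r_{m'})=r_{m'}/(1-\alpha)$ for $m'>m$, so formula~\eqref{eq:canonical:form:rho:h:finite} collapses exactly to the linear functional in~\eqref{eq:H:ES:I}; the symmetric collapse in case~\ref{ES:bounds:item:II} produces~\eqref{eq:H:ES:II}. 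Consequently,
\begin{equation*}
\overline{\operatorname{ES}}_\alpha(\mathcal C_{\mathcal S,\epsilon}(C^{\operatorname{ref}}))=\max_{m\in\{0,\dots,\bar m\}}\sup_{\mathbf{r}\in\mathcal R_m(\epsilon)}H_m(\mathbf{r})=\max_{m}\overline H_m,
\end{equation*}
and analogously for the infimum with $\underline H_m$, where $H_m$ denotes the corresponding affine functional.

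For attainability, each $\mathcal R_m(\epsilon)$ is the intersection of the compact convex set $\mathcal R(\epsilon)$ (from Proposition~\ref{prop:bounds:rho:h}) with two closed half-spaces, hence compact; the affine functional $H_m$ is continuous, so on every nonempty $\mathcal R_m(\epsilon)$ both $\overline H_m$ and $\underline H_m$ are attained, and empty pieces can be discarded without affecting the max/min. The main bookkeeping obstacle I anticipate is purely in the boundary indexing: one must check that the empty-sum conventions and the vacuous treatment of the inequalities on $r_0$ and $r_{\bar m + 1}$ correctly reproduce the two extreme cases (all $r_m\ge 1-\alpha$ giving $H_{\bar m}=\sum_{m=0}^{\bar m}z_m$ in case~\ref{ES:bounds:item:I}, and all $r_m<1-\alpha$ giving $H_0=z_0+(1-\alpha)^{-1}\sum_{m=1}^{\bar m}z_m r_m$), and that overlaps between adjacent $\mathcal R_m(\epsilon)$ on the boundary $r_m=1-\alpha$ cause no inconsistency, which follows because the two candidate linear expressions agree there.
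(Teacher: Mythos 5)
Your proposal is correct and follows essentially the same route as the paper: decompose $\mathcal R(\epsilon)$ as $\bigcup_{m=0}^{\bar m}\mathcal R_m(\epsilon)$ using the monotonicity of $(r_m)_m$, observe that the piecewise-linear distortion $h_\alpha^{\operatorname{ES}}$ makes $H$ affine on each piece (collapsing to~\eqref{eq:H:ES:I} or~\eqref{eq:H:ES:II}), interchange the outer max/min with the inner sup/inf, and obtain attainability from compactness and convexity of each nonempty $\mathcal R_m(\epsilon)$ together with linearity of the objective. Your additional checks of the boundary indexing and of consistency on the overlaps $r_m=1-\alpha$ are sound but not needed beyond what the paper records.
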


Note that, for $m\in\{0,\dots,\bar m\}$ such that $\mathcal R_m(\epsilon)=\emptyset$, we have that $\underline H_m = \infty$ and $\overline H_m = -\infty$.
For $m\in\{0,\dots,\bar m\}$ with $\mathcal R_m(\epsilon)\neq \emptyset$, the problems for finding $\underline H_m$ and $\overline H_m$ are convex programs since $\mathcal R_m(\epsilon)$ is a closed convex set and the functions in~\eqref{eq:H:ES:I} and~\eqref{eq:H:ES:II} are linear.

\subsection{Computation with linear programming}\label{sec:linear:program}

In this section, we consider the case that the norm $\|\cdot\|$ in~\eqref{eq:uncertainty:set} and~\eqref{eq:bar:epsilon} is 
the $\mathcal L^1$-norm $\|\cdot\|_1$ or the $\mathcal L^\infty$-norm $\|\cdot\|_{\infty}$.
We show that, in this case, all the optimization problems in Section~\ref{sec:reformulation:VaR:ES} reduce to combinations of linear programs.
Here, we refer to a \emph{linear program (LP)} as the optimization of a linear objective function under linear constraints, which can be solved efficiently by standard software.
For reference, Appendix~\ref{sec:appendix:programs} provides more detailed descriptions of linear programs required to compute the risk bounds.

The next result follows directly from standard techniques as in~\citet[][Chapter~6]{boyd2004convex}. 

\begin{Lemma}[The case of  $\mathcal L^1$- or $\mathcal L^{\infty}$-norm]
Suppose that $L$ is monotone.
    \begin{enumerate}[label=(\roman*)]
        \label{lem:lp:norm}
        \item\label{lem:item:i} 
        If $\|\cdot\|=\|\cdot\|_1$, the set $\mathcal R(\epsilon)$ in~\eqref{eq:worst:varrho:ref} and~\eqref{eq:best:varrho:ref} can be replaced by
     \begin{align*}
         \mathcal R_{1}(\epsilon)=\bigcap_{j=1}^{6\bar m-1}\left\{
         \begin{pmatrix}
    \mathbf{r}\\             \mathbf{s}\\
\end{pmatrix}\in \mathbb{R}^{2\bar m}: K_j(\mathbf{r},\mathbf{s})\le 0\right\},
     \end{align*}
      for some linear functions $K_j:\mathbb{R}^{2\bar m}\rightarrow \mathbb{R}$, $j=1,\dots,6\bar m-1$, where $\mathbf{s}\in\mathbb{R}^{\bar m}$ is a vector of auxiliary variables. 
\item\label{lem:item:ii} 
        If $\|\cdot\|=\|\cdot\|_{\infty}$, then the set $\mathcal R(\epsilon)$ in~\eqref{eq:worst:varrho:ref} and~\eqref{eq:best:varrho:ref} can be replaced by
     \begin{align*}
         \mathcal R_{\infty}(\epsilon)=\bigcap_{j=1}^{6\bar m-2}\{\mathbf{r}\in \mathbb{R}^{\bar m}: K_j(\mathbf{r})\le 0\},
     \end{align*}
      for some linear functions $K_j:\mathbb{R}^{\bar m}\rightarrow \mathbb{R}$, $j=1,\dots,6\bar m-2$.
    \end{enumerate}
\end{Lemma}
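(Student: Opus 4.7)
The plan is to keep the $4\bar m - 2$ linear constraints $K_1,\dots,K_{4\bar m-2}$ in the representation of $\mathcal R(\epsilon)$ from Proposition~\ref{prop:bounds:rho:h} untouched and simply rewrite the single nonlinear constraint $K_{4\bar m - 1}(\mathbf{r}) = \|\mathbf{r}-\mathbf{r}^{\operatorname{ref}}\| - \epsilon \le 0$ as a system of linear inequalities, handling the two norms separately by the standard tricks from~\citet[][Chapter~6]{boyd2004convex}. For~\ref{lem:item:ii}, I would use the identity $\|\mathbf{r}-\mathbf{r}^{\operatorname{ref}}\|_{\infty}\le \epsilon \iff -\epsilon \le r_m - r_m^{\operatorname{ref}} \le \epsilon$ for every $m=1,\dots,\bar m$, which yields $2\bar m$ linear inequalities in $\mathbf{r}$ alone, adding up to $(4\bar m-2)+2\bar m = 6\bar m - 2$ constraints in total, and gives $\mathcal R_\infty(\epsilon)=\mathcal R(\epsilon)$ verbatim.

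For~\ref{lem:item:i}, I would introduce the vector of slack variables $\mathbf{s}=(s_1,\dots,s_{\bar m})\in \mathbb{R}^{\bar m}$ and impose, in addition to $K_1,\dots,K_{4\bar m-2}$, the $2\bar m$ linear inequalities $r_m - r_m^{\operatorname{ref}} - s_m \le 0$ and $-(r_m - r_m^{\operatorname{ref}}) - s_m \le 0$ for $m=1,\dots,\bar m$, together with the single inequality $\sum_{m=1}^{\bar m} s_m - \epsilon \le 0$. This totals $(4\bar m - 2) + 2\bar m + 1 = 6\bar m - 1$ linear inequalities in the variables $(\mathbf{r},\mathbf{s})$, matching the claimed count. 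The equivalence follows because at fixed $\mathbf{r}$ the minimal feasible $s_m$ is $|r_m - r_m^{\operatorname{ref}}|$, so the projection of $\mathcal R_1(\epsilon)$ onto the $\mathbf{r}$-coordinates equals $\mathcal R(\epsilon)$; since the objective functions in~\eqref{eq:worst:varrho:ref} and~\eqref{eq:best:varrho:ref} depend only on $\mathbf{r}$, replacing $\mathcal R(\epsilon)$ by $\mathcal R_1(\epsilon)$ leaves the optimal values unchanged.

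The main obstacle, if any, is only bookkeeping: verifying that the constraint counts come out to $6\bar m - 1$ and $6\bar m - 2$ exactly, and that the projection argument for the $\mathcal L^1$ case is valid for both the supremum and infimum problems simultaneously. Since no $s_m$ appears in the objective and the slacks may always be pushed down to $|r_m - r_m^{\operatorname{ref}}|$ without violating feasibility, this projection argument is immediate and no subtler convex-analysis machinery is needed.
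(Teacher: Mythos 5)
Your proposal is correct and follows essentially the same route as the paper: the $\mathcal L^\infty$ ball is unpacked into $2\bar m$ linear inequalities directly, and the $\mathcal L^1$ ball is handled via $\bar m$ auxiliary slack variables with $2\bar m+1$ additional linear constraints, exactly as in the paper's proof (which likewise omits the redundant nonnegativity constraints on $\mathbf{s}$ to reach the count $6\bar m-1$). Your explicit projection argument for why the optimal values are unchanged in the $\mathcal L^1$ case is a slightly more careful phrasing of the equivalence the paper asserts, but the substance is identical.
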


Assume that the norm $\|\cdot\|$ is  $\|\cdot\|_1$ or $\|\cdot\|_{\infty}$.
Then the following are implied from Lemma~\ref{lem:lp:norm}~\ref{lem:item:i} and~\ref{lem:item:ii}.
First, when $h=h^{\mathbb{E}}$, computations of~\eqref{eq:worst:varrho:ref} and~\eqref{eq:best:varrho:ref} can be reformulated as LPs.
Second, calculations of $\overline r_m$ and $\underline r_m$ in Proposition~\ref{prop:var:bounds} are LPs, and thus each VaR bound can be computed by solving LPs at most $\bar m$ times.
Finally, calculations of $\overline H_m$ and $\underline H_m$ in Proposition~\ref{prop:es:bounds} are also LPs, and thus each ES bound can be computed by solving LPs at most $\bar m + 1$ times.

Next, we comment on the computation of~\eqref{eq:bar:epsilon:ref}.
For $\mathcal R$ in Corollary~\ref{cor:bar:epsilon}, define
$\overline r_m = \max\{r_m: \mathbf{r}\in\mathcal R\}$, $
    \underline r_m = \min\{r_m: \mathbf{r}\in\mathcal R\}$ and 
    $r_m^\ast = |\overline r_m - r_m^{\text{ref}}|\vee |\underline r_m - r_m^{\text{ref}}|$
    for $m\in\{1,\dots,\bar m\}$.
Then $\overline \epsilon(\mathcal C) = \max_{m\in\{1,\dots,\bar m\}} r_m^\ast$ if $\|\cdot\|=\|\cdot\|_{\infty}$.
If $\|\cdot\|=\|\cdot\|_{1}$, we have the upper bound $\overline \epsilon(\mathcal C)\le \sum_{m=1}^{\bar m} r_m^\ast=:\overline \epsilon^\ast(\mathcal C)$, and computing the exact value of $\overline \epsilon(\mathcal C)$ may require computationally intensive methods such as vertex enumeration.

We end this section by noting that the reduction to linear programming is not limited to the $\mathcal{L}^1$ and $\mathcal{L}^\infty$-norms but extends to any norms whose $\epsilon$-ball is a polyhedron, such as the \emph{CVaR norm} (also known as the D-norm or largest-$k$ norm) studied in~\citet{gotoh2016two}.

\section{Numerical experiments}\label{sec:numerical}

This section numerically compares the bounds derived in Section~\ref{sec:uncertainty:level} and those in Remark~\ref{rem:tankov}.
For marginal distributions of $X$ and $Y$, we consider the \emph{Gompertz law} with an upper truncation, whose cdf is given by
\begin{align*}
F(x;t,m,\sigma,\omega)=\frac{\tilde F(x;t,m,\sigma)}{\tilde F(\omega;t,m,\sigma)},\quad 0\le x\le \omega,
\end{align*}
where $t\ge 0, m,\sigma,\omega>0$ and $\tilde F$ is given by
\begin{align*}
        \tilde F(x\,;\, t,m,\sigma) = 1-\exp\left[
    \exp\left(\frac{t-m}{\sigma}\right)\cdot \left\{1-\exp\left(\frac{x}{\sigma}\right)\right\}
    \right], \quad x \ge 0.
\end{align*}
Note that $t$ is the age at the beginning of the contract, $m$ is the mode parameter, $\sigma$ is the dispersion parameter and $\omega$ is the maximum lifetime.

Let $(t_X, m_X,\sigma_X,\omega_X)$ and $(t_Y, m_Y,\sigma_Y,\omega_Y)$ be the sets of parameters for $X$ and $Y$, respectively.
We set $(m_X,\sigma_X)=(85.47,10.45)$ and $(m_Y,\sigma_Y)=(91.57,8.13)$ according to Table~5 of~\citet{dufresne2018age}; see also~\citet{frees1996annuity} and~\citet{carriere2000bivariate}.
We set $\omega_X=115 - t_X$ and  $\omega_Y=115 - t_Y$, with $t_X$ and $t_Y$ determined differently for contracts.
For the contracts, we consider the following four simple ones with $v=1/(1+0.05)$:
\begin{enumerate}[label=(\roman*)]
\item \emph{First-to-die annuity (F2DA)}: $(t_X,t_Y)=(35,32)$ and $L=\sum_{k=1}^{K_\wedge}a\,v^k$, $a=1$, which is increasing in $K_\wedge$;
\item \emph{Second-to-die annuity (S2DA)}: $(t_X,t_Y)=(65,62)$ and $L=\sum_{k=1}^{K_\vee}a\,v^k$,  $a= 1.169$, which is increasing in $K_\vee$;
\item \emph{First-to-die insurance (F2DI)}: $(t_X,t_Y)=(65,62)$ and $L=b\,v^{K_\wedge +1}$, $b=35.036$, which is decreasing in $K_\wedge$;
\item \emph{Second-to-die insurance (S2DI)}: $(t_X,t_Y)=(65,62)$ and $L=b\,v^{K_\vee+1}$, $b=63.531$, which is decreasing in $K_\vee$.
\end{enumerate}
The quantities $a,b\ge 0$ are taken such that the price $\mathbb{E}[L]$ is equal for all the contracts when the underlying survival copula is the independence copula.

\begin{figure}[t!]
    \centering
\includegraphics[width=0.8\textwidth]{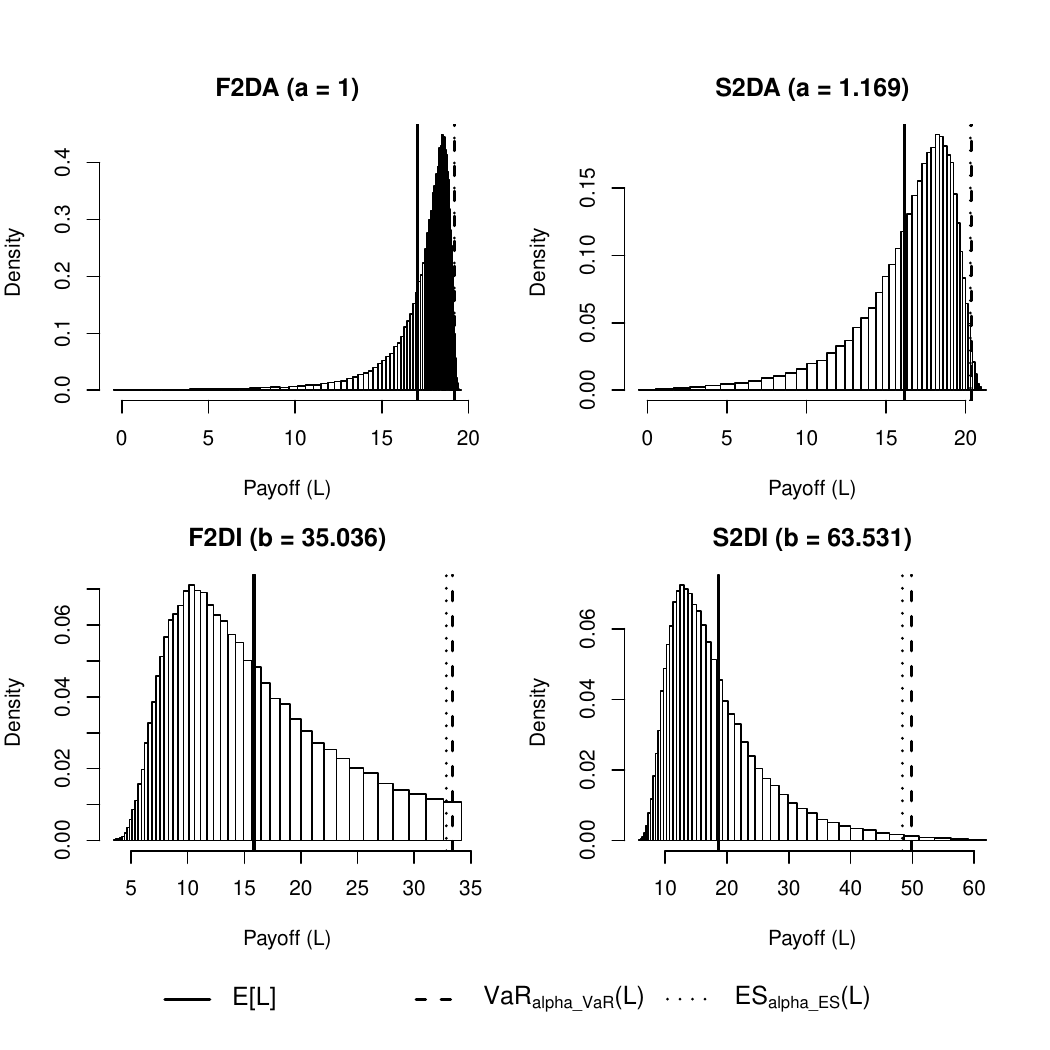}
    \caption{Histograms of simulated payoffs ($L$) for the four contracts (F2DA, S2DA, F2DI and S2DI) under the survival Gumbel copula with $\tau = 0.49$. Vertical lines indicate the true values for $\mathbb{E}[L]$ (solid), VaR at 99\% (dashed), and ES at 97.5\% (dotted).
    The number of Monte Carlo replications is $10^5$.}
    \label{fig:risk_histograms}
\end{figure}

For the reference copula $C^{\text{ref}}$ of $(X,Y)$, we take the survival Gumbel copula according to~\citet{dufresne2018age}.
The Gumbel copula is 
$C_\delta(u, v) = \exp \left[- \left\{ (-\ln u)^{\delta} +  (-\ln v)^{\delta} \right\}^{1/\delta} \right]$, 
$u,v\in(0,1)$, where $\delta\in[1,\infty)$ is the dependence parameter.
Note that this copula corresponds to $\Pi$ and $M$ as $\delta \rightarrow 1$ and $\delta \rightarrow \infty$, respectively.
Moreover, $C_\delta$ has the \emph{Kendall's tau} $\tau = (\delta-1)/\delta$,
lower \emph{tail order}~\citep{hua2011tail} $\kappa_{\operatorname{L}}=2^{1/\delta}$
and the upper \emph{tail dependence coefficient} $\lambda_{\operatorname{U}} =2-2^{1/\delta}$; see~\citet{joe2014dependence}.
According to Table~6 of~\citet{dufresne2018age}, we choose $\delta=1.96$ for this reference copula, which leads to  $\tau =0.49$, $\kappa_{\operatorname{L}}=2^{1/\delta}= 1.42$
and $\lambda_{\operatorname{U}} =2-2^{1/\delta}=0.58$.
Based on the standard error in this table, the $3$-sigma range for $\delta$ is $(1.90, 2.02)$.

For VaR and ES, we consider the standard confidence levels $\alpha_{\operatorname{VaR}}=0.99$ and $\alpha_{\operatorname{ES}}=0.975$, respectively.

Fig.~\ref{fig:risk_histograms} shows histograms of simulated payoffs for the four contracts above when $C=C^{\text{ref}}$.
The number of Monte Carlo replications is $10^5$.
For reference, Table~\ref{table:price:bounds} reports the prices of contracts analytically computed for various copulas, where the prices are computed by $\mathbb{E}[L]+\lambda \, \varrho_h(L)$ with $h$ being $h_{\alpha^{\operatorname{VaR}}}^{\operatorname{VaR}}$ or $h_{\alpha^{\operatorname{ES}}}^{\operatorname{ES}}$.
The loading parameter is $\lambda=0.06$.

    \begin{table}[t]
\caption{Comparison of insurance premiums $\mathbb{E}[L] + \lambda \VaR_{0.99}(L)$ and $\mathbb{E}[L] + \lambda \ES_{0.975}(L)$ for various copulas.
The loading parameter is $\lambda=0.06$.} \label{table:price:bounds}
\centering
\begin{tabular}{lllllllll}
  \toprule
 & \multicolumn{4}{c}{VaR loading} & \multicolumn{4}{c}{ES loading} \\
\cmidrule(lr){2-5} \cmidrule(lr){6-9}
 & M & SG & Indep & W & M & SG & Indep & W \\
\midrule
 F2DA & 18.3053 & 18.2014 & 18.0041 & 17.7671 & 18.3049 & 18.2008 & 18.0039 & 17.7671 \\ 
  S2DA & 17.0538 & 17.3838 & 18.0862 & 18.9004 & 17.0490 & 17.3791 & 18.0821 & 18.8962 \\ 
  F2DI & 17.3908 & 17.8617 & 18.8642 & 20.0262 & 17.3490 & 17.8301 & 18.8355 & 19.9976 \\ 
  S2DI & 22.8272 & 21.6670 & 19.2026 & 15.9966 & 22.7490 & 21.5827 & 19.2362 & 15.9858 \\ 
   \bottomrule
\end{tabular}
\end{table}

Next, we compare the following bounds on $\varrho_h(C)$ for $h=h^{\mathbb{E}}$, $h_{\alpha^{\operatorname{VaR}}}^{\operatorname{VaR}}$ and $h_{\alpha^{\operatorname{ES}}}^{\operatorname{ES}}$.
\begin{enumerate}[label=(\roman*)]\setlength{\itemsep}{2mm}
    \item \textbf{$\epsilon$-ball bounds}:      $\overline \varrho_h(\mathcal C_{\mathcal S,\epsilon}(C^{\operatorname{ref}}))$ and $\underline \varrho_h(\mathcal C_{\mathcal S,\epsilon}(C^{\operatorname{ref}}))$ presented in Section~\ref{sec:uncertainty:level} for different $\epsilon>0$.
    For the case of $\mathcal L^1$-norm, we compute the bounds for $\epsilon\in [0, \overline \epsilon^\ast(\mathcal C)]$; see Section~\ref{sec:linear:program}.
For the case of $\mathcal L^\infty$-norm, we compute them for $\epsilon\in [0, \overline \epsilon(\mathcal C)]$.
\item \textbf{Fr{\'e}chet-Hoeffding (FH) bounds}: $\varrho_h(W)$ and $\varrho_h(M)$. We also compute $\varrho_h(\Pi)$.
\item \textbf{Improved FH bounds with given values on a region}: 
 $\varrho_h(B^{\mathcal S',Q})$ and $\varrho_h(A^{\mathcal S',Q})$ in Remark~\ref{rem:tankov}, where $\mathcal S'=\{(u_m, v_m)\in \mathcal S: 0.2 \le u_m,v_m \le 0.8\}$ and
$Q=C^{\text{ref}}$.
\item 
\textbf{Improved FH bounds with a given value of Kendall's tau}: we also compute the bounds on the set of copulas with Kendall's tau given by $\tau = 0.49$; see Remark~\ref{rem:tankov}.
\end{enumerate}
The results are summarized in Fig.~\ref{fig:bounds:L1} for $\mathcal L^1$-norm and in  Fig.~\ref{fig:bounds:Linf} for $\mathcal L^\infty$-norm.

\begin{figure}[t]
    \centering
    \includegraphics[width=\textwidth]{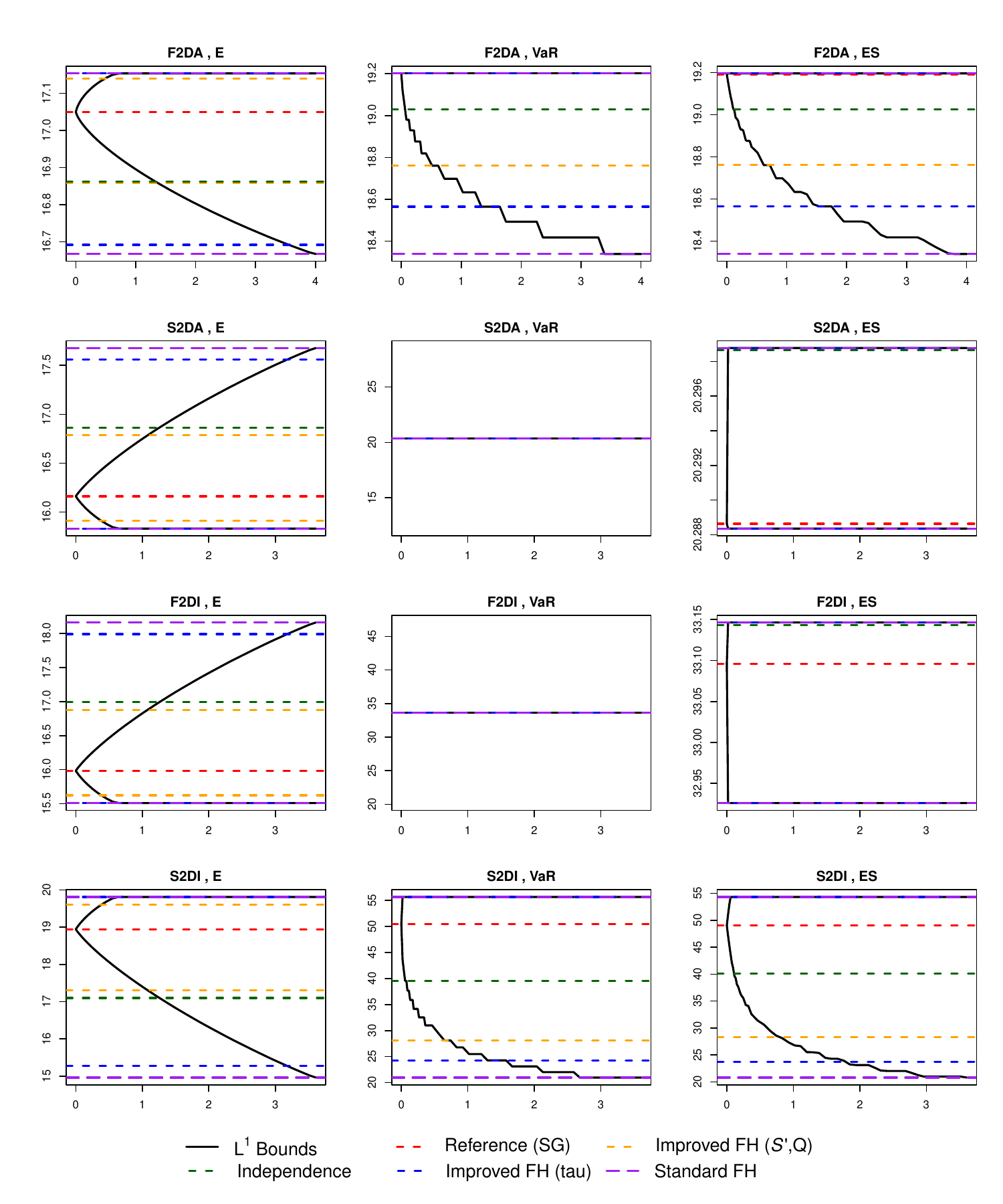}
    \caption{
     Bounds on the Expectation (E), 99\% VaR, and 97.5\% ES for the four contract types, plotted as a function of the $\mathcal L^1$ distance $\epsilon$ from a reference survival Gumbel copula with $\tau = 0.49$ (denoted by $C^{\text{ref}}$). The solid black lines represent the  bounds for a given $\epsilon$. The horizontal dashed lines show the values obtained under $C^{\text{ref}}$ (red), the independence copula (dark green), the improved Fr{\'e}chet-Hoeffding (FH) bounds given Kendall’s tau $\tau = 0.49$ (blue), the improved FH bounds given values $Q=C^{\text{ref}}$ on a subset $\mathcal S'=\{(u_m, v_m)\in \mathcal S: 0.2 \le u_m,v_m \le 0.8\}$ (orange), and the standard FH bounds (purple).
     }
    \label{fig:bounds:L1}
\end{figure}

\begin{figure}[t]
    \centering
    \includegraphics[width=\textwidth]{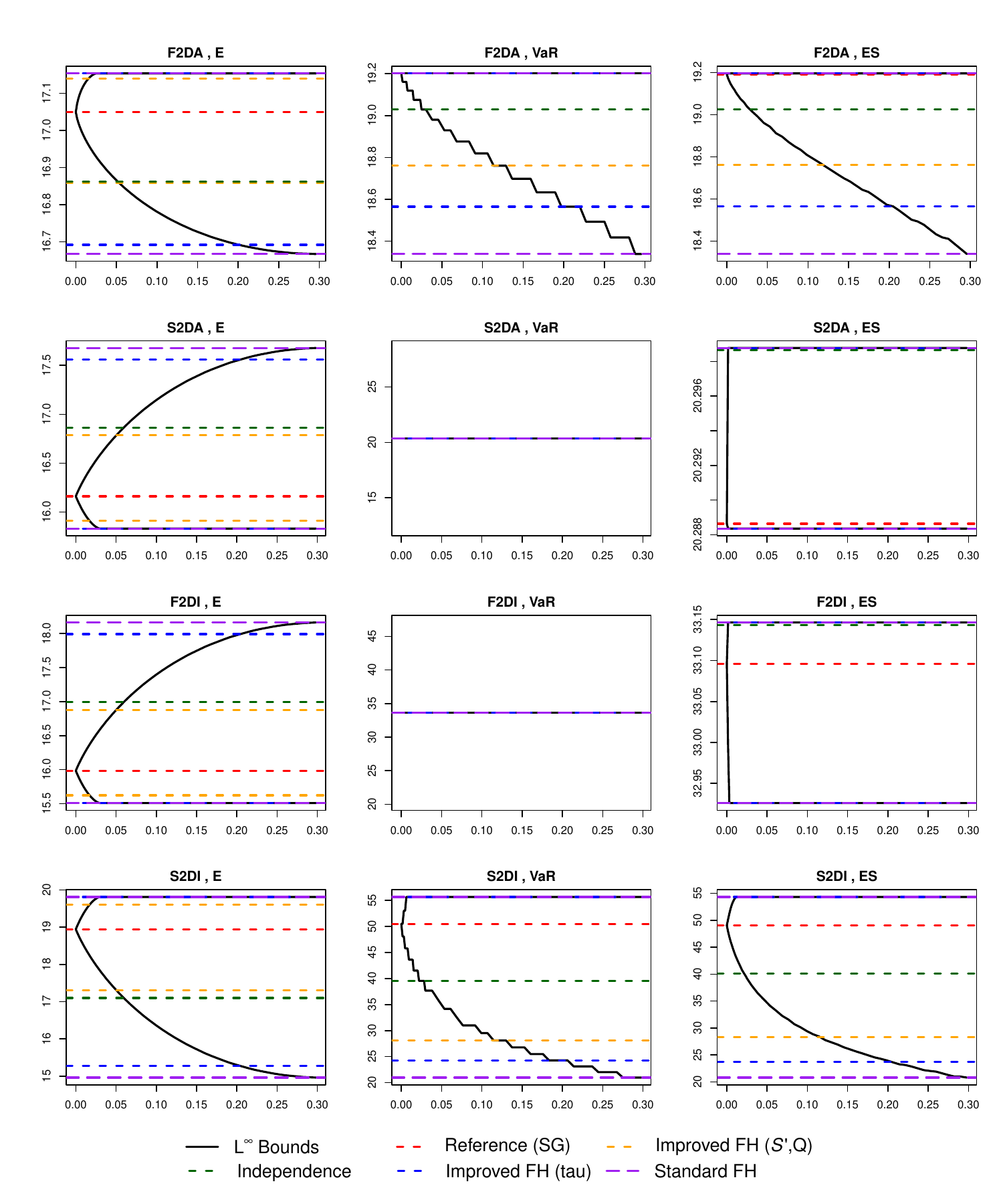}
    \caption{ Bounds on the Expectation (E), 99\% VaR, and 97.5\% ES for the four contract types, plotted as a function of the $\mathcal L^\infty$ distance $\epsilon$ from a reference survival Gumbel copula with $\tau = 0.49$. For the (quasi-) copulas in comparison, see the caption of Fig.~\ref{fig:bounds:L1}.}
    \label{fig:bounds:Linf}
\end{figure}

\begin{figure}[h!]
    \centering
    \includegraphics[width=\textwidth]{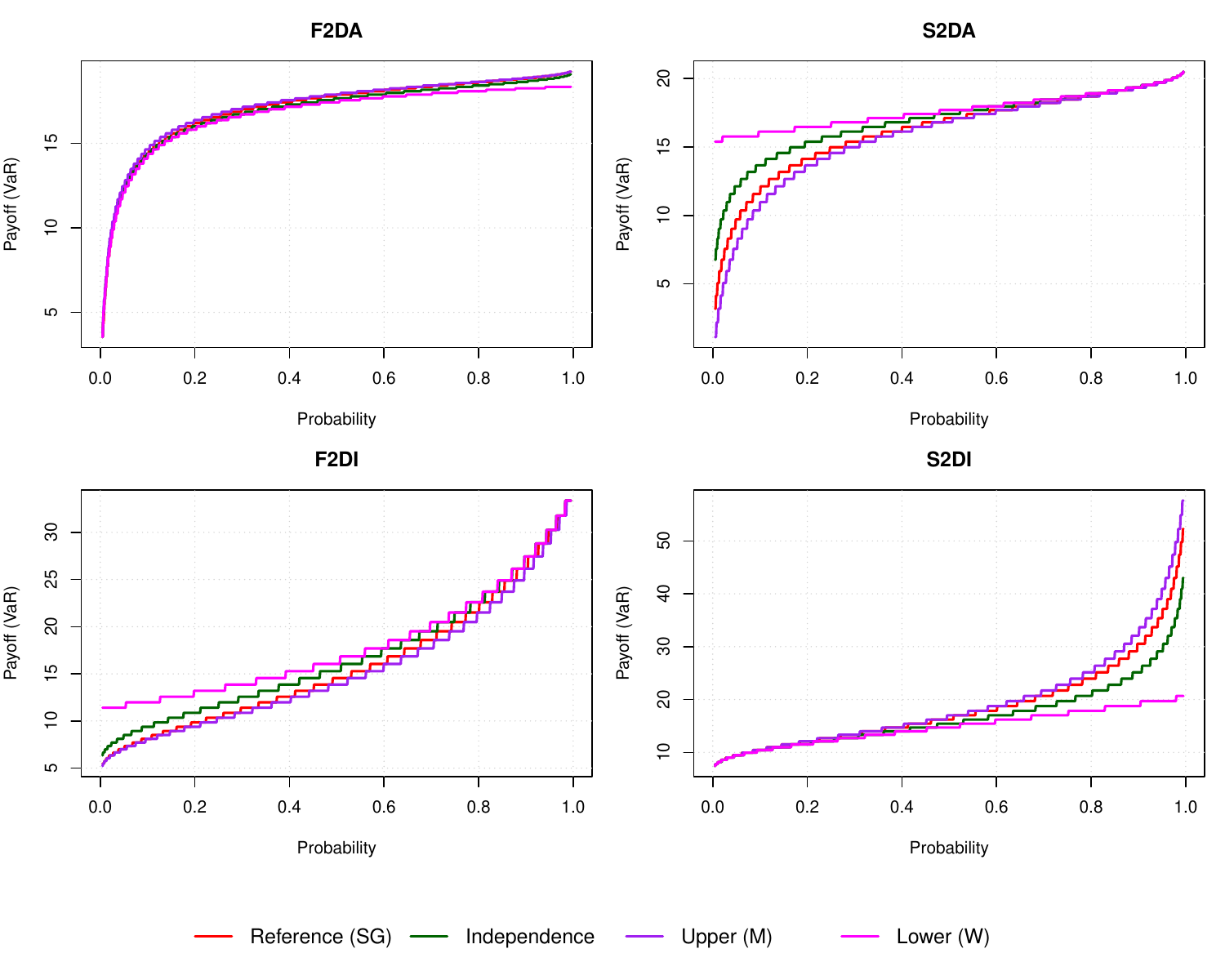}
    \caption{Empirical quantile curves of the payoffs for various copulas in comparison. The quantiles are computed based on simulated samples with the number of Monte Carlo replications $10^5$.
    }
    \label{fig:quantiles}
\end{figure}

We observe that the curves $\epsilon \mapsto \overline \varrho_h(\mathcal C_{\mathcal S,\epsilon}(C^{\operatorname{ref}}))$ and $\epsilon \mapsto \underline \varrho_h(\mathcal C_{\mathcal S,\epsilon}(C^{\operatorname{ref}}))$ behave similarly for the two cases of $\mathcal L^1$ and $\mathcal L^\infty$-norms.
Therefore, we focus on the case of $\mathcal L^1$-norm in Fig.~\ref{fig:bounds:L1}.

First, we comment on the horizontal lines in Fig.~\ref{fig:bounds:L1}.
Note that the improved FH bounds with $(\mathcal S',Q)$ share the reference copula $C^{\text{ref}}$ on the body part $\mathcal S'$.
Due to this construction, these improved FH bounds typically yield values closer to $\varrho_h(C^{\text{ref}})$ compared with other bounds especially for the case of expectation (first column in Fig.~\ref{fig:bounds:L1}).
Although the unspecified tail part $\mathcal S\backslash \mathcal S'$ is more relevant for the cases of VaR and ES, we observe that the improved FH bounds with $(\mathcal S',Q)$ yield tighter bounds than the improved FH bounds with given Kendall's tau. Whether independence underestimates or overestimates the risk depends on the type of the contract. For some contracts, the standard FH bounds yield values close to $\varrho_h(C^{\text{ref}})$, but the bounds under $W$ and $M$ may be too wide to use in practice.

Next, we discuss the behavior of the curves $\epsilon \mapsto \overline \varrho_h(\mathcal C_{\mathcal S,\epsilon}(C^{\operatorname{ref}}))$ and $\epsilon \mapsto \underline \varrho_h(\mathcal C_{\mathcal S,\epsilon}(C^{\operatorname{ref}}))$.
For the case of expectation, the upper and lower bounds continuously approach the FH bounds as $\epsilon$ increases.
Although this behavior is common for all the contracts, this is not the case for VaR and ES.
For F2DA, the copula $C^{\text{ref}}$ almost attains the upper bounds on VaR and ES.
For S2DI, we observe that $\overline \varrho_h(\mathcal C_{\mathcal S,\epsilon}(C^{\operatorname{ref}}))$ attains the FH bound with a tiny increase of $\epsilon$.
This indicates that S2DI may be highly sensitive to dependence uncertainty, and it may be advisable to evaluate VaR and ES by the worst cases if the reference copula is chosen without strong confidence.
For F2DA and S2DI, increasing the level of uncertainty mainly contributes to widening the lower risk bound.

    For S2DA and F2DI, we observe that VaR and ES are insensitive to dependence uncertainty.
To see this, Fig.~\ref{fig:quantiles} displays the quantile curves of the contracts for various copulas compared in this experiment.
The quantiles are computed based on simulated samples with the number of Monte Carlo replications $10^5$.
We observe that S2DA and F2DI have much tighter risk bounds in the upper tail regions than F2DA and S2DI. 
The difference in sensitivity among the contracts can be explained by considering tail events corresponding to extremely large payoffs.
For F2DA and S2DA, relevant tail events are  $\{T_{\wedge}\ge t\}$ and $\{T_{\vee}\ge t\}$, respectively, for $t$ close to the upper endpoint.
For F2DI and S2DI, risky scenarios are $\{T_{\wedge}\le t\}$ and $\{T_{\vee}\le t\}$, respectively, for $t$ close to the lower endpoint.
For F2DA and S2DI, the tail events are intersections of marginal tail events, whose probabilities are largely determined by the dependence structure.
On the other hand, for S2DA and F2DI, the corresponding events are unions of marginal tail events, whose probabilities are dominated by the sum of marginal probabilities, and thus the influence of the dependence structure can be relatively negligible.

Except for these special cases, our proposed $\epsilon$-ball bounds provide a spectrum of the interval between $\varrho_h(W)$ and $\varrho_h(M)$ via the uncertainty level $\epsilon$.
This spectrum of bounds may be useful for tightening the bounds, with robustness against the model specification taken into account, based on the current information on the dependence between $X$ and $Y$.
For example, when $\mathcal D$ is the set of survival Gumbel copulas where $\delta $ is in  the $3$-sigma range $(1.90,2.02)$ above, the resulting $\overline \epsilon(\mathcal D)$ for F2DA is approximately 0.05 for $\mathcal L^1$-norm and 0.002 for $\mathcal L^\infty$-norm. 
Based on Fig.~\ref{fig:bounds:L1} and Fig.~\ref{fig:bounds:Linf}, these values of $\epsilon$ offer much tighter bounds on $\varrho_h(C)$ than the existing ones.

\section{Concluding remarks}\label{sec:conclusion}

This paper proposes a novel framework for the robust risk evaluation of joint life insurance products in the presence of dependence uncertainty. 
The framework utilizes distortion risk measures to quantify the risk associated with the unknown dependence structure between the lifetimes of the two insureds. As the main contribution of this research, we show that the bounds for the expectation, VaR and ES can be computed by combinations of linear programs  when the uncertainty set is defined by an $\mathcal L^1$ or $\mathcal L^\infty$-ball centered around a reference copula.
Our numerical experiments demonstrate the benefits of the proposed framework, such as illustrating the sensitivity of the risk bounds against the level of dependence uncertainty and improving the existing risk bounds based on the available information.

It is of great interest to generalize the results in this paper in various directions.
The first direction is to extend the framework for broader classes of contracts, such as those with non-monotone payoffs and those involving more than two insureds. 
Second, it may be worth exploring to derive bounds based on other information, such as multivariate ageing properties~\citep{lai2006stochastic} and degrees of non-exchangeable (tail) dependence~\citep{koike2023measuring}.
Finally, analyzing the bounds for a portfolio of joint life insurance contracts can also be an interesting avenue for future research.

\backmatter

\bmhead{Supplementary information}
The \textsf{R} scripts to reproduce all numerical results are available upon request.

\bmhead{Acknowledgements}
The author is grateful to Haruki Tsunekawa at Nagoya University, Jun-ya Gotoh at Chuo University, Mario V. W\"{u}thrich at ETH Z\"urich and Silvana Pesenti at University of Toronto for their insightful comments on the early version of this paper.
Takaaki Koike is supported by JSPS KAKENHI Grant Number JP24K00273.

\section*{Declarations}
Not applicable.




\begin{appendices}

\section{Joint life insurance contracts}\label{sec:contracts}

This section presents insurance and pension contracts, whose prices are monotone w.r.t.~the concordance order $\preceq$ on $\mathcal C$.
For $k=1,2,\dots$, let $i_k>0$ be an effective annual rate of interest for the $k$th year, and let $v_k=1/(1+i_k)$ be the one-period discount factor.
We write $D_0=1$ and $D_k=v_1\cdots v_k$, $k\ge 1$, for the cumulative discount factor from time $k$ to time $0$. 
We assume that $v_1,v_2,\dots,$ are all non-random.
In the following, $n\in \mathbb{N}$ denotes the policy term.
For each $k\in\mathbb N_0$, define
\begin{align*}
p_k(C)&=\P(K_{\wedge}\ge k)=C(\bar F(k),\bar G(k)),\\
q_k(C)&=\P(K_{\vee}\ge k)=\bar F(k)+\bar G(k)-C(\bar F(k),\bar G(k)),
\end{align*}
for convenience.
Then $C\mapsto p_k(C)$ is increasing and $C\mapsto q_k(C)$ is decreasing  w.r.t.~$\preceq$.

\begin{enumerate}[label=(\roman*)]
    \item\label{item:joint:life:annuity} \emph{Joint life annuity}: suppose that an insurer pays $a_k\ge 0$ at the end of the $k$th year for $k\in\{1,\dots,n\}$ if both insureds are alive at that time.
    Then the random present value of the expense of this contract for the insurer is 
    \begin{align*}
    L=\sum_{k=1}^{n \wedge K_{\wedge}}a_k\, D_k,
    \end{align*}
    and its price is given by
    \begin{align*}
\mu(C)=        \sum_{k=1}^n a_k\, D_k \,\mathbb{P}(K_{\wedge}\ge k)=
\sum_{k=1}^n a_k\, D_k\,
p_k(C).
    \end{align*}
    Note that the \emph{joint life pure endowment} is the special case when $a_n>0$ and $a_k=0$ for $k\neq n$.
    \item \label{item:last:survivor:annuity} \emph{Last survivor annuity}: suppose that an insurer pays $a_k\ge 0$ at the end of the $k$th year for $k \in \{1,\dots,n\}$ if at least one of the two insureds is alive at that time.
    Then the payoff of this contract is 
\begin{align*}
    L=\sum_{k=1}^{n\wedge K_{\vee}}a_k\, D_k,
    \end{align*}
    and its price is given by
    \begin{align*}
\nonumber \mu(C)&=        \sum_{k=1}^n a_k\, D_k \,\mathbb{P}(K_{\vee}\ge k)\\
&=
\sum_{k=1}^n a_k\, D_k \,q_k(C)\\
&=
\sum_{k=1}^n a_k\, D_k\,\bar F(k)
+\sum_{k=1}^n a_k\, D_k\,\bar G(k)
-\sum_{k=1}^n a_k\, D_k\,
        C\left(
    \bar F(k),\bar G(k)
    \right).
    \end{align*}
The  \emph{last survivor pure endowment} is the special case when $a_n>0$ and $a_k=0$ for $k\neq n$.
    \item\label{item:joint:life:insurance} \emph{Joint life insurance}: suppose that an insurer pays $b_k\ge 0$ if the first death of the two insureds occurs in the $k$th year for $k\in \{1,\dots,n\}$.
    Then the payoff is given by
    \begin{align*}
    L=
    \sum_{k=0}^{n-1} b_{k+1} \, D_{k+1} \,\id_{\{K_{\wedge}=k\}}
    \end{align*}
    and its expectation is given by
  \begin{align*}
\nonumber \mu(C)&=   
\sum_{k=1}^n b_k \, D_k \,\mathbb{P}(k-1\le T_{\wedge}<k)\\&=
\sum_{k=1}^n b_k\, D_k\,
        \left \{ 
p_{k-1}(C)-p_k(C)
    \right\}\\
    &=b_1D_1 - b_n D_n p_n(C) + \sum_{k=1}^{n-1} \left(b_{k+1} D_{k+1}-b_{k}D_k \right)p_k(C).
    \end{align*}
    Therefore, if $b_k D_k \ge b_{k+1}D_{k+1}$ for $k=1,\dots,n-1$, then $C\mapsto \mu(C)$ is decreasing w.r.t.~$\preceq$.
    For the whole-life version with $L=b_{K_{\wedge}+1}D_{K_{\wedge}+1}$, the payoff is decreasing in $K_{\wedge}$ if $(b_kD_k)_{k\ge 1}$ is decreasing.
    \item\label{item:last:survivor:insurance} \emph{Last survivor insurance}: 
    suppose that an insurer pays $b_k\ge 0$ if the second death of the two insureds occurs in the $k$th year for $k\in \{1,\dots,n\}$.
    Then the payoff is given by
    \begin{align*}
    L=\sum_{k=0}^{n-1}b_{k+1}D_{k+1}\id_{\{K_{\vee}=k\}}.
    \end{align*}
Hence
    \begin{align*}
\nonumber \mu(C)&=   
\sum_{k=1}^n b_k \, D_k \,\mathbb{P}(k-1\le T_{\vee}<k)\\&=
\sum_{k=1}^n b_k\, D_k\,
        \left \{ 
q_{k-1}(C)-q_k(C)
    \right\}\\
    &=b_1D_1 - b_n D_n q_n(C) + \sum_{k=1}^{n-1} \left(b_{k+1} D_{k+1}-b_{k}D_k \right)q_k(C).
    \end{align*}
    Therefore, if $b_k D_k \ge b_{k+1}D_{k+1}$ for $k=1,\dots,n-1$, then $C\mapsto \mu(C)$ is increasing w.r.t.~$\preceq$.
        For the whole-life version with $L=b_{K_{\vee}+1}D_{K_{\vee}+1}$, the payoff is decreasing in $K_{\vee}$ if $(b_kD_k)_{k\ge 1}$ is decreasing.
    
    \end{enumerate}

The following contracts do not admit monotone payoffs, but their prices are monotone w.r.t.~$\preceq$.

\begin{enumerate}[label=(\roman*)]\setcounter{enumi}{4}
     \item\label{item:reversionary:annuity} \emph{Reversionary annuity}: suppose that an insurer pays $a_k\ge 0$ at the end of the $k$th year for $k\in\{1,\dots,n\}$ if only one insured is alive at that time.
     Then the payoff is given by
    \begin{align*}
    L=\sum_{k=1}^{n\wedge K_{\vee}}a_k\, D_k-\sum_{k=1}^{n\wedge K_{\wedge}}a_k\, D_k,
    \end{align*}
    and thus
    \begin{align*}
\nonumber \mu(C)&=       \sum_{k=1}^n a_k\, D_k \,\mathbb{P}(K_{\vee}\ge k)-   \sum_{k=1}^n a_k\, D_k \,\mathbb{P}(K_{\wedge}\ge k)\\
&=\sum_{k=1}^n a_k\, D_k\,\bar F(k)
+\sum_{k=1}^n a_k\, D_k\,\bar G(k)
-2\sum_{k=1}^n a_k\, D_k\,
        C\left(
    \bar F(k),\bar G(k)
    \right),
    \end{align*}
    which is decreasing w.r.t.~$\preceq$.
         \item\label{item:widow:pension} \emph{Widow's pension}: suppose that an insurer pays $a_k\ge 0$ at the end of $k$th year for $k\in \{1,\dots,n\}$ if the second insured corresponding to $Y$ is not alive and the first one associated with $X$ is alive at that time.
     Then the payoff is given by
    \begin{align*}
    L=\sum_{k=1}^{n\wedge K_{X}}a_k\, D_k-\sum_{k=1}^{n\wedge K_{\wedge}}a_k\, D_k,
    \end{align*}
    and thus
    \begin{align*}
\mu(C)&=\sum_{k=1}^n a_k\, D_k\,\bar F(k)
 -\sum_{k=1}^n a_k\, D_k\,
    C\left(
    \bar F(k),\bar G(k)
    \right),
    \end{align*}
        which is decreasing w.r.t.~$\preceq$.
\end{enumerate}

\section{Proofs}\label{sec:proofs}

\begin{proof}[Proof of Proposition~\ref{prop:monotonicity:distortion}]

For $t\ge 0$, write
\begin{align*}
    p_t(C)&= C\left(\bar F(t),\bar G(t)\right),\\
    q_t(C)&=\bar F(t)+\bar G(t)-C\left(\bar F(t),\bar G(t)\right),
\end{align*}
where $C$, the survival copula of $(X,Y)$, is the cdf of $(\bar F(X),\bar G(Y))$.
For every $t \in \mathbb{N}_0=\{0,1,\dots\}$, we have that
\begin{align*}
    \mathbb{P}(K_{\wedge}\ge t)&=\mathbb{P}(T_{\wedge}\ge t)=p_t(C),\\
    \mathbb{P}(K_{\vee}\ge t)&=\mathbb{P}(T_{\vee}\ge t)=q_t(C).
\end{align*}
Recall also that $h(0)=0$ and $h(1)=1$.

We first consider the case where $L=g(K_{\wedge})$ with $g$ being an increasing function.
Let
\begin{align*}
    \ell_k&=g(k)\ge 0,\quad k\in \mathbb{N}_0,\\
    \ell_{\infty}&=\lim_{k\to \infty}g(k)\in [0,\infty].
\end{align*}
If $x<\ell_0$, then $\mathbb{P}(L\le x)=0$.
If $\ell_{\infty}<\infty$ and $x\ge \ell_{\infty}$, then $\mathbb{P}(L>x)=0$.
For $\ell_0\le x<\ell_{\infty}$, let
\begin{align*}
    \bar k=\bar k(x)=\max\{k \in \mathbb{N}_0:\ell_k \le x\}.
\end{align*}
Then
\begin{align*}
    \mathbb{P}(L\le x)
    &=\sum_{k\in\mathbb{N}_0}\mathbb{P}(L\le x\mid K_{\wedge}=k)\mathbb{P}(K_{\wedge}=k)\\
    &=\sum_{k\in\mathbb{N}_0}\id_{\{\ell_k\le x\}}\{p_k(C)-p_{k+1}(C)\}\\
    &=\sum_{k=0}^{\bar k(x)}\{p_k(C)-p_{k+1}(C)\}\\
    &=1-p_{\bar k(x)+1}(C).
\end{align*}
Therefore,
\begin{align*}
    \varrho_h(C)
    =\int_0^\infty h(\mathbb{P}(L>x))\,\rd x
    =\ell_0+\int_{[\ell_0,\ell_\infty)} h\left(p_{\bar k(x)+1}(C)\right)\,\rd x.
\end{align*}
Recursively define the increasing sequence $k_0,k_1,\dots \in \mathbb{N}_0$ by
\begin{align*}
k_0&=0,\\
k_m&=
\begin{cases}
\min\{k>k_{m-1}:\ell_k\neq \ell_{k_{m-1}}\},&\text{if the set is nonempty},\\
k_{m-1}+1,&\text{otherwise},
\end{cases}
\quad m\in \mathbb{N}.
\end{align*}
Then $k_0<k_1<\cdots$ and $(\ell_{k_m})_{m\in\mathbb{N}_0}$ is increasing.
Moreover, whenever $\ell_{k_{m-1}}<\ell_{k_m}$ and $\ell_{k_{m-1}}\le x<\ell_{k_m}$, we have $\bar k(x)=k_m-1$.
Hence
\begin{align}
\label{eq:distortion:inc:min}
    \varrho_h(C)
    &=\ell_{k_0}+\sum_{m\in\mathbb{N}}(\ell_{k_m}-\ell_{k_{m-1}})\,h\left(p_{k_m}(C)\right)\notag\\
    &=z_0+\sum_{m\in\mathbb{N}}z_m\,h(C(u_m,v_m)),
\end{align}
where $z_0=\ell_{k_0}=\ell_0\ge 0$, $z_m=\ell_{k_m}-\ell_{k_{m-1}}\ge 0$, and
$(u_m,v_m)=(\bar F(k_m),\bar G(k_m))\in [0,1]^2$
for $m\in\mathbb{N}$.
Since $(z_m)_{m\in\mathbb{N}_0}$ and $(u_m,v_m)$, $m\in\mathbb{N}$, do not depend on $C$ and $h$ is increasing, we conclude that $C\mapsto \varrho_h(C)$ is increasing w.r.t.~$\preceq$.

When $L=g(K_{\vee})$ for an increasing function $g$, we obtain
\begin{align*}
    \varrho_h(C)
    =\ell_0+\int_{[\ell_0,\ell_\infty)} h\left(q_{\bar k(x)+1}(C)\right)\,\rd x
\end{align*}
by analogous discussion.
Therefore, we have that
\begin{align}
\label{eq:distortion:inc:max}
    \varrho_h(C)
    &=\ell_{k_0}+\sum_{m\in\mathbb{N}}(\ell_{k_m}-\ell_{k_{m-1}})\,h\left(q_{k_m}(C)\right)\notag\\
    &=z_0+\sum_{m\in\mathbb{N}}z_m\,h(u_m+v_m-C(u_m,v_m)),
\end{align}
from which $C\mapsto \varrho_h(C)$ is decreasing.
Note that $(z_m)_{m\in\mathbb{N}_0}$ and $(u_m,v_m)$, $m\in\mathbb{N}$, in~\eqref{eq:distortion:inc:max} are the same as those used in~\eqref{eq:distortion:inc:min}.

Next, suppose that $L=g(K_{\wedge})$ for a decreasing function $g$.
Let
\begin{align*}
    \ell_k&=g(k),\quad k\in\mathbb{N}_0,\\
    \ell_{\infty}&=\lim_{k\to \infty}g(k)\in [0,\infty).
\end{align*}
If $x>\ell_0$, then $\mathbb{P}(L\le x)=1$.
For $\ell_{\infty}<x\le \ell_0$, let
\begin{align*}
    \underline k=\underline k(x)=\min\{k \in \mathbb{N}_0:\ell_k \le x\}.
\end{align*}
Then
\begin{align*}
    \mathbb{P}(L\le x)
    &=\sum_{k\in\mathbb{N}_0}\mathbb{P}(L\le x\mid K_{\wedge}=k)\mathbb{P}(K_{\wedge}=k)\\
    &=\sum_{k\in\mathbb{N}_0}\id_{\{\ell_k\le x\}}\{p_k(C)-p_{k+1}(C)\}\\
    &=\sum_{k=\underline k(x)}^{\infty}\{p_k(C)-p_{k+1}(C)\}\\
    &=p_{\underline k(x)}(C).
\end{align*}
For $0\le x<\ell_{\infty}$, we have $\mathbb{P}(L>x)=1$.
Therefore,
\begin{align*}
    \varrho_h(C)
    =\int_0^\infty h(\mathbb{P}(L>x))\,\rd x
=\ell_{\infty}+\int_{(\ell_\infty,\ell_0]} h\left(1-p_{\underline k(x)}(C)\right)\,\rd x.
\end{align*}

Recursively define the increasing sequence $k_0,k_1,\dots \in \mathbb{N}_0$ by
\begin{align*}
k_0&=0,\\
k_m&=
\begin{cases}
\min\{k>k_{m-1}:\ell_k\neq \ell_{k_{m-1}}\},&\text{if the set is nonempty},\\
k_{m-1}+1,&\text{otherwise},
\end{cases}
\quad m\in \mathbb{N}.
\end{align*}
Then $k_0<k_1<\cdots$ and $(\ell_{k_m})_{m\in\mathbb{N}_0}$ is decreasing.
Moreover, whenever $\ell_{k_m}<\ell_{k_{m-1}}$ and $\ell_{k_m}\le x<\ell_{k_{m-1}}$, we have $\underline k(x)=k_m$.
Therefore, by taking $z_0=\ell_{\infty}$, $z_m=\ell_{k_{m-1}}-\ell_{k_m}\ge 0$, and
$(u_m,v_m)=(\bar F(k_m),\bar G(k_m))\in [0,1]^2$
for $m\in\mathbb{N}$, we have that
\begin{align}
\label{eq:distortion:dec:min}
    \varrho_h(C)
    &=\ell_{\infty}+\sum_{m\in\mathbb{N}}(\ell_{k_{m-1}}-\ell_{k_m})\,h\left(1-p_{k_m}(C)\right)\notag\\
    &=z_0+\sum_{m\in\mathbb{N}}z_m\,h(1-C(u_m,v_m)),
\end{align}
which is decreasing.

With the same $(z_m)_{m\in\mathbb{N}_0}$ and $(u_m,v_m)$, $m\in\mathbb{N}$, used in~\eqref{eq:distortion:dec:min}, if $L=g(K_{\vee})$, then
\begin{align}
\label{eq:distortion:dec:max}
    \varrho_h(C)
    &=\ell_{\infty}+\sum_{m\in\mathbb{N}}(\ell_{k_{m-1}}-\ell_{k_m})\,h\left(1-q_{k_m}(C)\right)\notag\\
    &=z_0+\sum_{m\in\mathbb{N}}z_m\,h(1-u_m-v_m+C(u_m,v_m)).
\end{align}
Therefore, $C\mapsto \varrho_h(C)$ is increasing.

For all the four cases, we set $(u_m,v_m)=(\bar F(k_m),\bar G(k_m))$, $m\in\mathbb{N}$.
Since $(k_m)_{m\in\mathbb{N}_0}$ is an increasing sequence and $\bar F$ and $\bar G$ are decreasing functions, we have that
$\mathcal S=\{(u_m,v_m):m\in\mathbb{N}\}$ is an increasing set with
$u_1\ge u_2\ge \cdots$ and $v_1\ge v_2\ge \cdots$.

Next, suppose that human lifespan is limited.
Since $k_m\to\infty$ as $m\to\infty$ and $\bar F(t)=0$ and $\bar G(t)=0$ for all sufficiently large $t$, the following is well-defined:
\begin{align*}
    \bar m_\wedge = \min\{m\in\mathbb{N}:u_m=0\text{ or }v_m=0\}.
\end{align*}
For every $m \ge \bar m_\wedge$, we have that
\begin{align*}
    h(C(u_m,v_m))=0
    \quad\text{and}\quad
    h(1-C(u_m,v_m))=1.
\end{align*}
Therefore, the quantity in~\eqref{eq:distortion:inc:min} remains unchanged by setting $z_m=0$ for all $m\ge \bar m_\wedge$, and this is also the case in~\eqref{eq:distortion:dec:min} by setting $z_m=0$ for all $m\ge \bar m_\wedge$ and replacing $z_0=\ell_\infty$ with
\begin{align*}
    \tilde z_0=\ell_\infty+\sum_{m=\bar m_\wedge}^{\infty} z_m.
\end{align*}

Similarly, define
\begin{align*}
    \bar m_\vee = \min\{m\in\mathbb{N}:u_m=0\text{ and }v_m=0\}.
\end{align*}
For every $m \ge \bar m_\vee$, we have that
\begin{align*}
    h(u_m+v_m-C(u_m,v_m))=0
    \quad\text{and}\quad
    h(1-u_m-v_m+C(u_m,v_m))=1.
\end{align*}
Therefore, the quantity in~\eqref{eq:distortion:inc:max} remains unchanged by setting $z_m=0$ for all $m\ge \bar m_\vee$, and this is also the case in~\eqref{eq:distortion:dec:max} by setting $z_m=0$ for all $m\ge \bar m_\vee$ and replacing $z_0=\ell_\infty$ with
\begin{align*}
    \tilde z_0=\ell_\infty+\sum_{m=\bar m_\vee}^{\infty} z_m.
\end{align*}

Consequently, in all the cases, the formulas reduce to finite sums.
\end{proof}

\begin{proof}[Proof of Proposition~\ref{prop:bounds:rho:h}]
Let
\begin{align*}
 H(\mathbf{r})=z_0+\sum_{m=1}^{\bar m} z_m h(r_m),\quad \mathbf{r}\in\mathbb{R}^{\bar m},
\end{align*}
where we extend the domain of $h$ from $[0,1]$ to $\mathbb{R}$ by $h(\beta)=0$ for $\beta <0$,  and $h(\beta)=1$ for $\beta>1$.
Then 
$\overline\varrho_h(\mathcal C_{\mathcal S,\epsilon}(C^{\operatorname{ref}}))=\sup\{H(\mathbf{r}):\mathbf{r}\in \tilde{\mathcal R}(\epsilon)\}$,
where $\tilde{\mathcal R}(\epsilon)$ is the collection of $\mathbf{r}\in \mathbb{R}^{\bar m}$ such that
\begin{align}\label{eq:r:in:C}
r_m=A_m + B_m C(u_m,v_m),\quad m=1,\dots,\bar m, 
\end{align}
for some $C\in \mathcal C_{\mathcal S,\epsilon}(C^{\operatorname{ref}})$.
For a given $\mathbf{r}\in \mathbb{R}^{\bar m}$, let $\boldsymbol{\theta}=(\theta_1,\dots,\theta_{\bar m})^\top\in\mathbb{R}^{\bar m}$ with $\theta_m=(r_m-A_m)/B_m$.
Then there exists $C\in \mathcal C_{\mathcal S,\epsilon}(C^{\operatorname{ref}})$ such that
$\theta_m=C(u_m,v_m)$, $m=1,\dots,\bar m$, if and only if $\boldsymbol{\theta}$ satisfies the following conditions:
  \begin{align}
\label{eq:cond:monotone}    
&0\le \theta_{\bar m}\le \cdots \le \theta_{1}\le1,\\
\label{eq:cond:rev}   
& u_{\bar m}+v_{\bar m}-\theta_{\bar m}\le \cdots \le u_1+v_1-\theta_1,\\
\label{eq:cond:bounds}     
&W(u_m,v_m)\le \theta_m\le M(u_m,v_m),\quad m=1,\dots,\bar m,\\
\label{eq:cond:ball}
&\|\boldsymbol{\theta}-\boldsymbol{\theta}^{\operatorname{ref}}\|\le \epsilon,\quad \text{where $\theta_m^{\operatorname{ref}}=C^{\operatorname{ref}}(u_m,v_m)$, $m=1,\dots,\bar m$}.
\end{align}
Conditions~\eqref{eq:cond:monotone},~\eqref{eq:cond:rev} and~\eqref{eq:cond:bounds} follow from~\citet{genest1999characterization} 
on the existence of a copula $C\in \mathcal C$ with given values on $\mathcal S$, and Condition~\eqref{eq:cond:ball} is the restriction such that this copula is in the $\epsilon$-ball around $C^{\operatorname{ref}}$.
Note that $0\le \theta_{\bar m}$ and $\theta_1 \leq 1$ in~Condition~\eqref{eq:cond:monotone} are implied from Condition~\eqref{eq:cond:bounds}.

Let $\mathbf{e}_1,\dots,\mathbf{e}_{\bar m}$ be $\bar m$-dimensional standard unit vectors.
Then Condition~\eqref{eq:cond:monotone}, excluding $0\le \theta_{\bar m}$ and $\theta_1 \leq 1$, can be reformulated into 
\begin{align*}
(\mathbf{e}_2-\mathbf{e}_1,
\mathbf{e}_3-\mathbf{e}_2,\dots,
\mathbf{e}_{\bar m}-\mathbf{e}_{\bar m-1}
)^\top\boldsymbol{\theta}
\le \mathbf{0}_{\bar m - 1}.
\end{align*}
Next, Condition~\eqref{eq:cond:rev} is equivalent to
\begin{align*}
(
\mathbf{e}_1-\mathbf{e}_2,
\mathbf{e}_2-\mathbf{e}_3,\dots,
\mathbf{e}_{\bar m - 1}-\mathbf{e}_{\bar m})^\top\boldsymbol{\theta}+\begin{pmatrix}
    u_2+v_2 - (u_1+v_1)\\
    \vdots\\
    u_{\bar m }+v_{\bar m }-(u_{\bar m -1}+v_{\bar m-1})
\end{pmatrix}\le \mathbf{0}_{\bar m - 1}.
\end{align*}
Let $I_{\bar m}$ be the 
$\bar m$-dimensional unit diagonal matrix.
Then Condition~\eqref{eq:cond:bounds} is represented by:
\begin{align*}
    I_{\bar m} \boldsymbol{\theta}+\begin{pmatrix}
        -M(u_1,v_1)\\
        \vdots\\
        -M(u_{\bar m},v_{\bar m})\\
    \end{pmatrix}\le \mathbf{0}_{\bar m}\quad\text{and}\quad 
    - I_{\bar m} \boldsymbol{\theta}+\begin{pmatrix}
        W(u_1,v_1)\\
        \vdots\\
        W(u_{\bar m},v_{\bar m})\\
    \end{pmatrix}\le \mathbf{0}_{\bar m}.\\
\end{align*}
Since $\boldsymbol{\theta}$ is linear in $\mathbf{r}=(r_1,\dots,r_{\bar m})^\top$, the set of conditions~\eqref{eq:cond:monotone},~\eqref{eq:cond:rev} and~\eqref{eq:cond:bounds} can be written as 
$\bigcap_{j=1}^{4\bar m-2}\{\mathbf{r}\in \mathbb{R}^{\bar m}: K_j(\mathbf{r})\le 0\}$ for some linear functions $K_j:\mathbb{R}^{\bar m}\rightarrow \mathbb{R}$, $j=1,\dots,4\bar m-2$.
Finally, Condition~\eqref{eq:cond:ball} is equivalent to $K_{4\bar m - 1}(\mathbf{r})\le 0$ with $K_{4\bar m - 1}(\mathbf{r})=\|\mathbf{r}-\mathbf{r}^{\operatorname{ref}}\|-\epsilon$.

Consequently, we have that $\tilde{\mathcal R}(\epsilon)=\mathcal R(\epsilon)$ with  $\mathcal R(\epsilon)=\bigcap_{j=1}^{4\bar m - 1}\{\mathbf{r}\in \mathbb{R}^{\bar m}: K_j(\mathbf{r})\le 0\}$ for some convex functions $K_j:\mathbb{R}^{\bar m}\rightarrow \mathbb{R}$, $j=1,\dots,4\bar m - 1$.
Note that $\mathcal R(\epsilon)\subset[0,1]^{\bar m}$, which can be checked straightforwardly from the Fr{\'e}chet-Hoeffding inequality.
\end{proof}

\begin{proof}[Proof of Corollary~\ref{cor:bar:epsilon}]
    The representation~\eqref{eq:bar:epsilon:ref} follows by an analogous discussion with that in the proof of Proposition~\ref{prop:bounds:rho:h}.
    The set $\mathcal R$ is given by  $\mathcal R=\bigcap_{j=1}^{4\bar m-2}\{\mathbf{r}\in \mathbb{R}^{\bar m}: K_j(\mathbf{r})\le 0\}$ where $K_j$, $j=1,\dots,4\bar m-2$, are as defined in the proof of Proposition~\ref{prop:bounds:rho:h} with $(A_m,B_m)=(0,1)$, $m=1,\dots,\bar m$.
Together with the continuity of $\mathbf{r}\mapsto \|\mathbf{r}-\mathbf{r}^{\operatorname{ref}}\|$, 
    the supremum in~\eqref{eq:bar:epsilon} is attainable.
\end{proof}

\begin{proof}[Proof of Proposition~\ref{prop:var:bounds}]
For convenience, write
\begin{align*}
H(\mathbf{r})=z_0+\sum_{m=1}^{\bar m}z_m h_{\alpha}^{\operatorname{VaR}}(r_m),\quad \mathbf{r}=(r_1,\dots,r_{\bar m})^\top\in \mathbb{R}^{\bar m}.
\end{align*}
Here and below, a sum over an empty index set is understood to be $0$.

First, suppose that $g$ is increasing. In this case, 
$r_{\bar m}\le \cdots \le r_{1}$ holds for any $\mathbf{r}\in\mathcal R(\epsilon)$, and thus $\overline r_{\bar m}\le \cdots \le \overline r_{1}$ and $\underline r_{\bar m}\le \cdots \le \underline r_{1}$.

If $m_{\operatorname{L}}\le \bar m$, then there exists $\mathbf{r}^{\ast}\in \mathcal R(\epsilon)$ such that $r_{m_{\operatorname{L}}}^{\ast}\le 1-\alpha$.
Since $r_{\bar m}^{\ast}\le \cdots \le r_{m_{\operatorname{L}}}^{\ast}$, we have that
$r_{m_{\operatorname{L}}}^{\ast},\dots,r_{\bar m}^{\ast}\le 1-\alpha$.
Moreover, if $m_{\operatorname{L}}\ge 2$, then $\underline r_{m_{\operatorname{L}}-1}>1-\alpha$, and hence
$r_{m_{\operatorname{L}}-1}^{\ast},\dots,r_1^{\ast}>1-\alpha$.
Therefore, we have that
\begin{align*}
H(\mathbf{r}^{\ast})=\sum_{m=0}^{m_{\operatorname{L}}-1}z_m,
\end{align*}
where the case $m_{\operatorname{L}}=1$ is included.
Hence
\begin{align*}
\underline{\operatorname{VaR}}_\alpha(\mathcal C_{\mathcal S,\epsilon}(C^{\operatorname{ref}}))
=\inf\{H(\mathbf{r}):\mathbf{r}\in\mathcal R(\epsilon)\}
\le \sum_{m=0}^{m_{\operatorname{L}}-1}z_m.
\end{align*}
If $m_{\operatorname{L}}=\bar m+1$, then $r_1,\dots,r_{\bar m}>1-\alpha$ for every $\mathbf{r}\in\mathcal R(\epsilon)$.
Therefore, for every $\mathbf{r}\in\mathcal R(\epsilon)$,
\begin{align*}
H(\mathbf{r})=\sum_{m=0}^{\bar m}z_m=\sum_{m=0}^{m_{\operatorname{L}}-1}z_m,
\end{align*}
and thus
\begin{align*}
\underline{\operatorname{VaR}}_\alpha(\mathcal C_{\mathcal S,\epsilon}(C^{\operatorname{ref}}))
=\sum_{m=0}^{m_{\operatorname{L}}-1}z_m.
\end{align*}
If $m_{\operatorname{L}}=1$, then
$\underline{\operatorname{VaR}}_\alpha(\mathcal C_{\mathcal S,\epsilon}(C^{\operatorname{ref}}))=z_0$
since $H(\mathbf{r})\ge z_0$ for all $\mathbf{r}\in\mathcal R(\epsilon)$.
Suppose that $m_{\operatorname{L}}\ge 2$.
Since $\underline r_{m_{\operatorname{L}}-1}>1-\alpha$, we have that
$r_{m_{\operatorname{L}}-1}, \cdots, r_{1}>1-\alpha$ for every $\mathbf{r}\in\mathcal R(\epsilon)$.
Therefore, we have that
\begin{align*}
H(\mathbf{r})\ge \sum_{m=0}^{m_{\operatorname{L}}-1}z_m
\end{align*}
for every $\mathbf{r}\in\mathcal R(\epsilon)$.
Therefore, for $m_{\operatorname{L}}\le \bar m$ we have that
\begin{align*}
\underline{\operatorname{VaR}}_\alpha(\mathcal C_{\mathcal S,\epsilon}(C^{\operatorname{ref}}))
=\sum_{m=0}^{m_{\operatorname{L}}-1}z_m.
\end{align*}

If $m_{\operatorname{U}}\ge 1$, then there exists $\mathbf{r}^{\ast}\in \mathcal R(\epsilon)$ such that
$r_{m_{\operatorname{U}}}^{\ast}> 1-\alpha$.
Since $r_{m_{\operatorname{U}}}^{\ast}\le \cdots \le r_1^{\ast}$, we have that
$r_1^{\ast},\dots,r_{m_{\operatorname{U}}}^{\ast}>1-\alpha$.
Moreover, if $m_{\operatorname{U}}\le \bar m-1$, then $\overline r_{m_{\operatorname{U}}+1}\le 1-\alpha$, and hence
$r_{m_{\operatorname{U}}+1}^{\ast},\dots,r_{\bar m}^{\ast}\le 1-\alpha$.
Therefore, we have that
\begin{align*}
H(\mathbf{r}^{\ast})=\sum_{m=0}^{m_{\operatorname{U}}}z_m,
\end{align*}
where the case $m_{\operatorname{U}}=\bar m$ is included.
Hence
\begin{align*}
\overline{\operatorname{VaR}}_\alpha(\mathcal C_{\mathcal S,\epsilon}(C^{\operatorname{ref}}))
=\sup\{H(\mathbf{r}):\mathbf{r}\in\mathcal R(\epsilon)\}
\ge \sum_{m=0}^{m_{\operatorname{U}}}z_m.
\end{align*}
If $m_{\operatorname{U}}=0$, then $r_1,\dots,r_{\bar m}\le 1-\alpha$ for every $\mathbf{r}\in\mathcal R(\epsilon)$.
Therefore, for every $\mathbf{r}\in\mathcal R(\epsilon)$,
\begin{align*}
H(\mathbf{r})=z_0=\sum_{m=0}^{m_{\operatorname{U}}}z_m,
\end{align*}
and thus
\begin{align*}
\overline{\operatorname{VaR}}_\alpha(\mathcal C_{\mathcal S,\epsilon}(C^{\operatorname{ref}}))
=\sum_{m=0}^{m_{\operatorname{U}}}z_m.
\end{align*}
If $m_{\operatorname{U}}=\bar m$, then
$\overline{\operatorname{VaR}}_\alpha(\mathcal C_{\mathcal S,\epsilon}(C^{\operatorname{ref}}))
=\sum_{m=0}^{m_{\operatorname{U}}}z_m$
since $H(\mathbf{r})\le \sum_{m=0}^{m_{\operatorname{U}}}z_m$ for all $\mathbf{r}\in\mathcal R(\epsilon)$.
Suppose that $m_{\operatorname{U}}\le \bar m-1$.
Since $\overline r_{m_{\operatorname{U}}+1}\le 1-\alpha$, we have that
$r_{\bar m}, \cdots, r_{m_{\operatorname{U}}+1}\le 1-\alpha$ for every $\mathbf{r}\in\mathcal R(\epsilon)$.
Therefore, we have that
\begin{align*}
H(\mathbf{r})\le \sum_{m=0}^{m_{\operatorname{U}}}z_m
\end{align*}
for every $\mathbf{r}\in\mathcal R(\epsilon)$.
Consequently, for $m_{\operatorname{U}}\ge 1$ we have that
\begin{align*}
\overline{\operatorname{VaR}}_\alpha(\mathcal C_{\mathcal S,\epsilon}(C^{\operatorname{ref}}))
=\sum_{m=0}^{m_{\operatorname{U}}}z_m.
\end{align*}

Next, suppose that $g$ is decreasing.
The bounds can be derived essentially in the same way as above, but we provide details for completeness.
In this case, $r_1\le \cdots \le r_{\bar m}$ holds for any $\mathbf{r}\in\mathcal R(\epsilon)$ from~\eqref{eq:r:in:C} in the proof of Proposition~\ref{prop:bounds:rho:h}.
This implies that $\overline r_1\le \cdots \le \overline r_{\bar m}$ and $\underline r_1\le \cdots \le \underline r_{\bar m}$.

If $m_{\operatorname{L}}\ge 1$, then there exists $\mathbf{r}^{\ast}\in \mathcal R(\epsilon)$ such that $r_{m_{\operatorname{L}}}^{\ast}\le 1-\alpha$.
Since $r_1^{\ast}\le \cdots \le r_{m_{\operatorname{L}}}^{\ast}$, we have that
$r_1^{\ast},\dots,r_{m_{\operatorname{L}}}^{\ast}\le 1-\alpha$.
Moreover, if $m_{\operatorname{L}}\le \bar m-1$, then $\underline r_{m_{\operatorname{L}}+1}>1-\alpha$, and hence
$r_{m_{\operatorname{L}}+1}^{\ast},\dots,r_{\bar m}^{\ast}>1-\alpha$.
Therefore, we have that
\begin{align*}
H(\mathbf{r}^{\ast})=z_0+\sum_{m=m_{\operatorname{L}}+1}^{\bar m}z_m,
\end{align*}
where the case $m_{\operatorname{L}}=\bar m$ is included.
Hence
\begin{align*}
\underline{\operatorname{VaR}}_\alpha(\mathcal C_{\mathcal S,\epsilon}(C^{\operatorname{ref}}))
=\inf\{H(\mathbf{r}):\mathbf{r}\in\mathcal R(\epsilon)\}
\le z_0+\sum_{m=m_{\operatorname{L}}+1}^{\bar m}z_m.
\end{align*}
If $m_{\operatorname{L}}=0$, then $r_1,\dots,r_{\bar m}>1-\alpha$ for every $\mathbf{r}\in\mathcal R(\epsilon)$.
Therefore, for every $\mathbf{r}\in\mathcal R(\epsilon)$,
\begin{align*}
H(\mathbf{r})=z_0+\sum_{m=1}^{\bar m}z_m
=z_0+\sum_{m=m_{\operatorname{L}}+1}^{\bar m}z_m,
\end{align*}
and thus
\begin{align*}
\underline{\operatorname{VaR}}_\alpha(\mathcal C_{\mathcal S,\epsilon}(C^{\operatorname{ref}}))
=z_0+\sum_{m=m_{\operatorname{L}}+1}^{\bar m}z_m.
\end{align*}
If $m_{\operatorname{L}}=\bar m$, then
$\underline{\operatorname{VaR}}_\alpha(\mathcal C_{\mathcal S,\epsilon}(C^{\operatorname{ref}}))=z_0$
since $H(\mathbf{r})\ge z_0$ for all $\mathbf{r}\in\mathcal R(\epsilon)$.
Suppose that $m_{\operatorname{L}}\le \bar m-1$.
Since $\underline r_{m_{\operatorname{L}}+1}>1-\alpha$, we have that
$r_{m_{\operatorname{L}}+1},\dots,r_{\bar m}>1-\alpha$ for every $\mathbf{r}\in\mathcal R(\epsilon)$.
Therefore, we have that
\begin{align*}
H(\mathbf{r})\ge z_0+\sum_{m=m_{\operatorname{L}}+1}^{\bar m}z_m
\end{align*}
for every $\mathbf{r}\in\mathcal R(\epsilon)$.
For $m_{\operatorname{L}}\ge 1$, this leads to
\begin{align*}
\underline{\operatorname{VaR}}_\alpha(\mathcal C_{\mathcal S,\epsilon}(C^{\operatorname{ref}}))
=z_0+\sum_{m=m_{\operatorname{L}}+1}^{\bar m}z_m.
\end{align*}

If $m_{\operatorname{U}}\le \bar m$, then there exists $\mathbf{r}^{\ast}\in \mathcal R(\epsilon)$ such that
$r_{m_{\operatorname{U}}}^{\ast}> 1-\alpha$.
Since $r_{m_{\operatorname{U}}}^{\ast}\le \cdots \le r_{\bar m}^{\ast}$, we have that
$r_{m_{\operatorname{U}}}^{\ast},\dots,r_{\bar m}^{\ast}>1-\alpha$.
Moreover, if $m_{\operatorname{U}}\ge 2$, then $\overline r_{m_{\operatorname{U}}-1}\le 1-\alpha$, and hence
$r_1^{\ast},\dots,r_{m_{\operatorname{U}}-1}^{\ast}\le 1-\alpha$.
Therefore, we have that
\begin{align*}
H(\mathbf{r}^{\ast})=z_0+\sum_{m=m_{\operatorname{U}}}^{\bar m}z_m,
\end{align*}
where the case $m_{\operatorname{U}}=1$ is included.
Hence
\begin{align*}
\overline{\operatorname{VaR}}_\alpha(\mathcal C_{\mathcal S,\epsilon}(C^{\operatorname{ref}}))
=\sup\{H(\mathbf{r}):\mathbf{r}\in\mathcal R(\epsilon)\}
\ge z_0+\sum_{m=m_{\operatorname{U}}}^{\bar m}z_m.
\end{align*}
If $m_{\operatorname{U}}=\bar m+1$, then $r_1,\dots,r_{\bar m}\le 1-\alpha$ for every $\mathbf{r}\in\mathcal R(\epsilon)$.
Therefore, for every $\mathbf{r}\in\mathcal R(\epsilon)$,
\begin{align*}
H(\mathbf{r})=z_0=z_0+\sum_{m=m_{\operatorname{U}}}^{\bar m}z_m,
\end{align*}
and thus
\begin{align*}
\overline{\operatorname{VaR}}_\alpha(\mathcal C_{\mathcal S,\epsilon}(C^{\operatorname{ref}}))
=z_0+\sum_{m=m_{\operatorname{U}}}^{\bar m}z_m.
\end{align*}
If $m_{\operatorname{U}}=1$, then
$\overline{\operatorname{VaR}}_\alpha(\mathcal C_{\mathcal S,\epsilon}(C^{\operatorname{ref}}))
=\sum_{m=0}^{\bar m}z_m$
since $H(\mathbf{r})\le \sum_{m=0}^{\bar m}z_m$ for all $\mathbf{r}\in\mathcal R(\epsilon)$.
Suppose that $m_{\operatorname{U}}\ge 2$.
Since $\overline r_{m_{\operatorname{U}}-1}\le 1-\alpha$, we have that
$r_1,\dots,r_{m_{\operatorname{U}}-1}\le 1-\alpha$ for every $\mathbf{r}\in\mathcal R(\epsilon)$.
Therefore, we have that
\begin{align*}
H(\mathbf{r})\le z_0+\sum_{m=m_{\operatorname{U}}}^{\bar m}z_m
\end{align*}
for every $\mathbf{r}\in\mathcal R(\epsilon)$.
For $m_{\operatorname{U}}\le \bar m$, we conclude that
\begin{align*}
\overline{\operatorname{VaR}}_\alpha(\mathcal C_{\mathcal S,\epsilon}(C^{\operatorname{ref}}))
=z_0+\sum_{m=m_{\operatorname{U}}}^{\bar m}z_m.
\end{align*}
\end{proof}

\begin{proof}[Proof of Proposition~\ref{prop:es:bounds}]
For convenience, write
\begin{align*}
H(\mathbf{r})&=z_0+\sum_{m=1}^{\bar m}z_m h_{\alpha}^{\operatorname{ES}}(r_m),\quad \mathbf{r}=(r_1,\dots,r_{\bar m})^\top\in \mathbb{R}^{\bar m}.
\end{align*}
For each $m=0,\dots,\bar m$, the function $\mathbf{r}\mapsto H(\mathbf{r})$ on $\mathcal R_m(\epsilon)$ is of Form~\eqref{eq:H:ES:I} if $g$ is increasing and of Form~\eqref{eq:H:ES:II} if $g$ is decreasing.
Since $\mathcal R(\epsilon)=\cup_{m=0}^{\bar m} \mathcal R_m(\epsilon)$, we have that
\begin{align*}
\underline{\operatorname{ES}}_\alpha(\mathcal C_{\mathcal S,\epsilon}(C^{\operatorname{ref}}))&=\inf\{H(\mathbf{r}): \mathbf{r}\in\mathcal R(\epsilon)\}\\
&=\min\left[\inf\{H(\mathbf{r}): \mathbf{r}\in\mathcal R_m(\epsilon)\}: m\in\{0,\dots,\bar m\}\right]\\
&=\min\left\{\underline H_m: m\in\{0,\dots,\bar m\}\right\}.
\end{align*}
Similarly, we have that
\begin{align*}
\overline{\operatorname{ES}}_\alpha(\mathcal C_{\mathcal S,\epsilon}(C^{\operatorname{ref}}))&=\sup\{H(\mathbf{r}): \mathbf{r}\in\mathcal R(\epsilon)\}\\
&=\max\left[\sup\{H(\mathbf{r}): \mathbf{r}\in\mathcal R_m(\epsilon)\}: m\in\{0,\dots,\bar m\}\right]\\
&=\max\left\{\overline H_m: m\in\{0,\dots,\bar m\}\right\}.
\end{align*}

Since $\mathcal R(\epsilon)\neq \emptyset$, there exists $m\in\{0,\dots,\bar m\}$ such that $\mathcal R_m(\epsilon)\neq \emptyset$. 
For this $m$, the bounds $\underline H_m$ and $\overline H_m$ are attainable since 
$\mathcal R_m(\epsilon)$ is compact and convex, and $H$ is linear on $\mathcal R_m(\epsilon)$.
Consequently, the bounds $\underline{\operatorname{ES}}_\alpha(\mathcal C_{\mathcal S,\epsilon}(C^{\operatorname{ref}}))$ and $\overline{\operatorname{ES}}_\alpha(\mathcal C_{\mathcal S,\epsilon}(C^{\operatorname{ref}}))$ are attainable. 
\end{proof}

\begin{proof}[Proof of Lemma~\ref{lem:lp:norm}]
\begin{enumerate}
\item[\ref{lem:item:i}]  Take $\|\cdot\|=\|\cdot\|_{1}$. 
By introducing auxiliary variables $\mathbf{s}=(s_1,\dots,s_{\bar m})^\top\in \mathbb{R}^{\bar m}$, Condition~\eqref{eq:cond:ball} in the proof of Proposition~\ref{prop:bounds:rho:h} is equivalent to:
\begin{align*}
    \mathbf{1}_{\bar m}^\top \mathbf{s}-\epsilon \le 0,\quad -\mathbf{s}\le \mathbf{0}_{\bar m},\quad
   (I_{\bar m},-I_{\bar m})\begin{pmatrix}
    \boldsymbol{\theta}\\
    \mathbf{s}
\\
\end{pmatrix}
   -\boldsymbol{\theta}^{\operatorname{ref}}\le \mathbf{0}_{\bar m},\quad 
    - (I_{\bar m},I_{\bar m})\begin{pmatrix}
    \boldsymbol{\theta}\\
    \mathbf{s}
\\
\end{pmatrix}+\boldsymbol{\theta}^{\operatorname{ref}}\le \mathbf{0}_{\bar m}. 
\end{align*}
Since the constraint $-\mathbf{s}\le \mathbf{0}_{\bar m}$ is implied from the latter two ones, $2\bar m + 1$ linear constraints are additionally imposed  in this case.

\item[\ref{lem:item:ii}]
Take $\|\cdot\|=\|\cdot\|_{\infty}$. 
Then Condition~\eqref{eq:cond:ball} reduces to
\begin{align*}
    \boldsymbol{\theta}+ (-\boldsymbol{\theta}^{\operatorname{ref}}-\epsilon\mathbf{1}_{\bar m})\le \mathbf{0}_{\bar m}\quad \text{and}\quad 
    -\boldsymbol{\theta}+ (\boldsymbol{\theta}^{\operatorname{ref}}-\epsilon\mathbf{1}_{\bar m})\le \mathbf{0}_{\bar m}.
\end{align*}
Therefore, $\mathcal R(\epsilon)$ yields an intersection of $6\bar m-2$ linear constraints.
\end{enumerate}
\end{proof}

\section{Descriptions of linear programs}\label{sec:appendix:programs}

In Section~\ref{sec:uncertainty:level}, we showed that computing the bounds on $\E$, $\operatorname{VaR}$ and $\operatorname{ES}$ reduces to combinations of linear programs (LPs).
Since the programs are described at an abstract level, this section is devoted to describing these linear programs more concretely.

Under the assumption of limited human lifespan, the first task for a given monotone contract is to identify the representation~\eqref{eq:canonical:form:rho:h:finite} following the proof of Proposition~\ref{prop:monotonicity:distortion}.
We denote the decision variables by $\mathbf{r}=(r_1,\dots,r_{\bar m})^\top \in \mathbb{R}^{\bar m}$.

\subsection{Linear constraints}

We describe the linear constraints for the standard case where $(A_m,B_m)=(0,1)$ for all $m=1,\dots,\bar m$.
We first introduce the following linear constraints and bounds:
\begin{equation}\label{eq:cond:copula:compatibility}
\begin{aligned}
  & r_{m+1} - r_m \le 0, \quad m=1,\dots,\bar m - 1,  \\
    & r_m - r_{m+1} \le (u_m + v_m) - (u_{m+1} + v_{m+1}), \quad m=1,\dots,\bar m - 1, \\
    & r_m \ge \max\left(0, u_m + v_m - 1 \right), \quad m=1,\dots,\bar m,  \\
     & r_m \le \min\left(u_m, v_m, 1 \right), \quad m=1,\dots,\bar m.
\end{aligned}
\end{equation}
For the case of $\mathcal{L}^\infty$-norm, we introduce
\begin{equation}\label{eq:cond:Linf}
\begin{aligned}
 &   r_m \le r_m^{\text{ref}} + \epsilon,\\
 &    r_m^{\text{ref}} - \epsilon\le r_m.
\end{aligned}
\end{equation}
For the case of $\mathcal{L}^1$-norm, we introduce the auxiliary variables $\mathbf{s}=(s_1,\dots,s_{\bar m})^\top \in \mathbb{R}^{\bar m}$ and consider
\begin{equation}\label{eq:cond:L1}
    \begin{aligned}
 & s_1+\cdots + s_{\bar m}\le \epsilon,\\
 & 0 \le s_m,\quad m=1,\dots,\bar m,\\
 & r_m  \le s_m +  r^{\text{ref}}_m, \quad m=1,\dots,\bar m,\\
 & r^{\text{ref}}_m -s_m \le  r_m, \quad m=1,\dots,\bar m.
\end{aligned}
\end{equation}

For other cases of $(A_m,B_m)$, we replace 
$r_m$ with $\theta_m = (r_m-A_m)/B_m$ and $r_m^{\text{ref}}$ with $\theta_m^{\text{ref}} = (r_m^{\text{ref}}-A_m)/B_m$, $m=1,\dots,\bar m$, in~\eqref{eq:cond:copula:compatibility},~\eqref{eq:cond:Linf} and~\eqref{eq:cond:L1}, and rewrite them for the new decision variables $r_1,\dots,r_{\bar m}$.

\subsection{Linear programming}

In this section, we describe LPs to compute the bounds introduced in the paper. 
We focus on the case of $\mathcal L^\infty$-norm.
For the case of $\mathcal L^1$-norm, it suffices to replace~\eqref{eq:cond:Linf} with~\eqref{eq:cond:L1} in the descriptions.

First, to compute $\overline \epsilon(\mathcal C)$, it is required to solve
\begin{align*}
\text{maximize / minimize}\quad r_m\quad 
\text{subject to~}\eqref{eq:cond:copula:compatibility},
\end{align*}
to find $\overline r_m$ and $\underline r_m$, respectively, for $m=1,\dots,\bar m$.
Next, to compute the bounds on the expectation, we solve the following LPs:
\begin{align*}
\text{maximize / minimize}\quad  z_0 + \sum_{m=1}^{\bar m} z_m r_m\quad 
\text{subject to~}\eqref{eq:cond:copula:compatibility}\text{ and }\eqref{eq:cond:Linf}.
\end{align*}
For computing the VaR bounds, we solve the following LPs, for $m=1,\dots,\bar m$, to find $\overline{r}_m$ and $\underline{r}_m$ in Proposition~\ref{prop:var:bounds}:
\begin{align*}
\text{maximize / minimize}\quad  r_m\quad 
\text{subject to~}\eqref{eq:cond:copula:compatibility}\text{ and }\eqref{eq:cond:Linf};
\end{align*}
see the proposition for the rest of calculation steps.
Finally, we describe LPs to compute the ES bounds. 
We first assume that $g$ is increasing.
Then, for $m=0,\dots,\bar m$, solve
\begin{align*}
&\text{maximize / minimize}\quad  \sum_{m'=0}^{m} z_{m'}  +  \sum_{m'=m+1}^{\bar m} \frac{z_{m'}}{1-\alpha}r_{m'}\\
&\qquad\text{subject to~}\eqref{eq:cond:copula:compatibility},~\eqref{eq:cond:Linf}\text{ and }
\begin{cases}
    r_1 \le 1-\alpha,& \text{ if $m=0$},\\
    r_{m+1} \le 1-\alpha\text{ and }1-\alpha \le r_m,& \text{ if $m=1,\dots,\bar m -1$},\\
    1-\alpha \le r_{\bar m},& \text{ if $m=\bar m$}.\\
\end{cases}
\end{align*}
If $g$ is decreasing, then solve, for $m=0,\dots,\bar m$,
\begin{align*}
&\text{maximize / minimize}\quad  z_0+\sum_{m'=m+1}^{\bar m} z_{m'}  +  \sum_{m'=1}^{m} \frac{z_{m'}}{1-\alpha}r_{m'}\\
&\qquad\text{subject to~}\eqref{eq:cond:copula:compatibility},~\eqref{eq:cond:Linf}\text{ and }
\begin{cases}
    1-\alpha \le r_1,& \text{ if $m=0$},\\
    r_{m} \le 1-\alpha\text{ and }1-\alpha \le r_{m+1},& \text{ if $m=1,\dots,\bar m -1$},\\
    r_{\bar m} \le 1-\alpha ,& \text{ if $m=\bar m$}.\\
\end{cases}
\end{align*}
The remaining calculation steps are provided in Proposition~\ref{prop:es:bounds}.




\end{appendices}



\begin{thebibliography}{}
\renewcommand{\doi}[1]{\url{https://doi.org/#1}}
\bibcommenthead

\bibitem [\protect \citeauthoryear {%
Alsina%
, Nelsen%
\BCBL {}\ \BBA {} Schweizer%
}{%
Alsina%
\ \protect \BOthers {.}}{%
{\protect \APACyear {1993}}%
}]{%
alsina1993characterization}
\APACinsertmetastar {%
alsina1993characterization}%
\begin{APACrefauthors}%
Alsina, C.%
, Nelsen, R.B.%
\BCBL {} Schweizer, B.%
\end{APACrefauthors}%
\unskip\
\newblock
\APACrefYearMonthDay{1993}{}{}.
\newblock
{\BBOQ}\APACrefatitle {On the characterization of a class of binary operations on distribution functions} {On the characterization of a class of binary operations on distribution functions}.{\BBCQ}
\newblock
\APACjournalVolNumPages{Statistics \& Probability Letters}{17}{2}{85--89,}
\newblock

\newblock

\PrintBackRefs{\CurrentBib}

\bibitem [\protect \citeauthoryear {%
Barrieu%
\ \BBA {} Scandolo%
}{%
Barrieu%
\ \BBA {} Scandolo%
}{%
{\protect \APACyear {2015}}%
}]{%
barrieu2015assessing}
\APACinsertmetastar {%
barrieu2015assessing}%
\begin{APACrefauthors}%
Barrieu, P.%
\BCBT {}\ \BBA {} Scandolo, G.%
\end{APACrefauthors}%
\unskip\
\newblock
\APACrefYearMonthDay{2015}{}{}.
\newblock
{\BBOQ}\APACrefatitle {Assessing financial model risk} {Assessing financial model risk}.{\BBCQ}
\newblock
\APACjournalVolNumPages{European Journal of Operational Research}{242}{2}{546--556,}
\newblock

\newblock

\PrintBackRefs{\CurrentBib}

\bibitem [\protect \citeauthoryear {%
Berge%
}{%
Berge%
}{%
{\protect \APACyear {1963}}%
}]{%
Berge1963}
\APACinsertmetastar {%
Berge1963}%
\begin{APACrefauthors}%
Berge, C.%
\end{APACrefauthors}%
\unskip\
\newblock
\APACrefYear{1963}.
\newblock
\APACrefbtitle {Topological Spaces} {Topological spaces}.
\newblock
\APACaddressPublisher{New York}{Macmillan}.
\PrintBackRefs{\CurrentBib}

\bibitem [\protect \citeauthoryear {%
Bernard%
, De~Vecchi%
\BCBL {}\ \BBA {} Vanduffel%
}{%
Bernard%
\ \protect \BOthers {.}}{%
{\protect \APACyear {2024}}%
}]{%
bernard2024robust}
\APACinsertmetastar {%
bernard2024robust}%
\begin{APACrefauthors}%
Bernard, C.%
, De~Vecchi, C.%
\BCBL {} Vanduffel, S.%
\end{APACrefauthors}%
\unskip\
\newblock
\APACrefYearMonthDay{2024}{}{}.
\newblock
{\BBOQ}\APACrefatitle {Robust assessment of life insurance products} {Robust assessment of life insurance products}.{\BBCQ}
\newblock
\APACjournalVolNumPages{Annals of Operations Research}{}{}{1--27,}
\newblock

\newblock

\PrintBackRefs{\CurrentBib}

\bibitem [\protect \citeauthoryear {%
Bernard%
, R{\"u}schendorf%
, Vanduffel%
\BCBL {}\ \BBA {} Wang%
}{%
Bernard%
\ \protect \BOthers {.}}{%
{\protect \APACyear {2017}}%
}]{%
bernard2017risk}
\APACinsertmetastar {%
bernard2017risk}%
\begin{APACrefauthors}%
Bernard, C.%
, R{\"u}schendorf, L.%
, Vanduffel, S.%
\BCBL {} Wang, R.%
\end{APACrefauthors}%
\unskip\
\newblock
\APACrefYearMonthDay{2017}{}{}.
\newblock
{\BBOQ}\APACrefatitle {Risk bounds for factor models} {Risk bounds for factor models}.{\BBCQ}
\newblock
\APACjournalVolNumPages{Finance and Stochastics}{21}{}{631--659,}
\newblock

\newblock

\PrintBackRefs{\CurrentBib}

\bibitem [\protect \citeauthoryear {%
Bignozzi%
, Puccetti%
\BCBL {}\ \BBA {} R{\"u}schendorf%
}{%
Bignozzi%
\ \protect \BOthers {.}}{%
{\protect \APACyear {2015}}%
}]{%
bignozzi2015reducing}
\APACinsertmetastar {%
bignozzi2015reducing}%
\begin{APACrefauthors}%
Bignozzi, V.%
, Puccetti, G.%
\BCBL {} R{\"u}schendorf, L.%
\end{APACrefauthors}%
\unskip\
\newblock
\APACrefYearMonthDay{2015}{}{}.
\newblock
{\BBOQ}\APACrefatitle {Reducing model risk via positive and negative dependence assumptions} {Reducing model risk via positive and negative dependence assumptions}.{\BBCQ}
\newblock
\APACjournalVolNumPages{Insurance: Mathematics and Economics}{61}{}{17--26,}
\newblock

\newblock

\PrintBackRefs{\CurrentBib}

\bibitem [\protect \citeauthoryear {%
Boyd%
\ \BBA {} Vandenberghe%
}{%
Boyd%
\ \BBA {} Vandenberghe%
}{%
{\protect \APACyear {2004}}%
}]{%
boyd2004convex}
\APACinsertmetastar {%
boyd2004convex}%
\begin{APACrefauthors}%
Boyd, S.%
\BCBT {}\ \BBA {} Vandenberghe, L.%
\end{APACrefauthors}%
\unskip\
\newblock
\APACrefYear{2004}.
\newblock
\APACrefbtitle {Convex Optimization} {Convex optimization}.
\newblock
\APACaddressPublisher{}{Cambridge University Press}.
\PrintBackRefs{\CurrentBib}

\bibitem [\protect \citeauthoryear {%
Carriere%
}{%
Carriere%
}{%
{\protect \APACyear {2000}}%
}]{%
carriere2000bivariate}
\APACinsertmetastar {%
carriere2000bivariate}%
\begin{APACrefauthors}%
Carriere, J.F.%
\end{APACrefauthors}%
\unskip\
\newblock
\APACrefYearMonthDay{2000}{}{}.
\newblock
{\BBOQ}\APACrefatitle {Bivariate survival models for coupled lives} {Bivariate survival models for coupled lives}.{\BBCQ}
\newblock
\APACjournalVolNumPages{Scandinavian Actuarial Journal}{2000}{1}{17--32,}
\newblock

\newblock

\PrintBackRefs{\CurrentBib}

\bibitem [\protect \citeauthoryear {%
Denuit%
\ \BBA {} Cornett%
}{%
Denuit%
\ \BBA {} Cornett%
}{%
{\protect \APACyear {1999}}%
}]{%
denuit1999multilife}
\APACinsertmetastar {%
denuit1999multilife}%
\begin{APACrefauthors}%
Denuit, M.%
\BCBT {}\ \BBA {} Cornett, A.%
\end{APACrefauthors}%
\unskip\
\newblock
\APACrefYearMonthDay{1999}{}{}.
\newblock
{\BBOQ}\APACrefatitle {Multilife Premium Calculation with Dependent Future Lifetimes} {Multilife premium calculation with dependent future lifetimes}.{\BBCQ}
\newblock
\APACjournalVolNumPages{Journal of Actuarial Practice, Volume 7, Nos. 1 and 2, 1999}{7}{}{147,}
\newblock

\newblock

\PrintBackRefs{\CurrentBib}

\bibitem [\protect \citeauthoryear {%
Denuit%
, Dhaene%
, Le~Bailly~de Tilleghem%
\BCBL {}\ \BBA {} Teghem%
}{%
Denuit%
\ \protect \BOthers {.}}{%
{\protect \APACyear {2001}}%
}]{%
denuit2001measuring}
\APACinsertmetastar {%
denuit2001measuring}%
\begin{APACrefauthors}%
Denuit, M.%
, Dhaene, J.%
, Le~Bailly~de Tilleghem, C.%
\BCBL {} Teghem, S.%
\end{APACrefauthors}%
\unskip\
\newblock
\APACrefYearMonthDay{2001}{}{}.
\newblock
{\BBOQ}\APACrefatitle {Measuring the impact of dependence among insured lifelengths} {Measuring the impact of dependence among insured lifelengths}.{\BBCQ}
\newblock
\APACjournalVolNumPages{Belgian Actuarial Bulletin}{1}{1}{18--39,}
\newblock

\newblock

\PrintBackRefs{\CurrentBib}

\bibitem [\protect \citeauthoryear {%
Dickson%
, Hardy%
\BCBL {}\ \BBA {} Waters%
}{%
Dickson%
\ \protect \BOthers {.}}{%
{\protect \APACyear {2019}}%
}]{%
dickson2019actuarial}
\APACinsertmetastar {%
dickson2019actuarial}%
\begin{APACrefauthors}%
Dickson, D.C.%
, Hardy, M.R.%
\BCBL {} Waters, H.R.%
\end{APACrefauthors}%
\unskip\
\newblock
\APACrefYear{2019}.
\newblock
\APACrefbtitle {Actuarial mathematics for life contingent risks} {Actuarial mathematics for life contingent risks}.
\newblock
\APACaddressPublisher{}{Cambridge University Press}.
\PrintBackRefs{\CurrentBib}

\bibitem [\protect \citeauthoryear {%
Dufresne%
, Hashorva%
, Ratovomirija%
\BCBL {}\ \BBA {} Toukourou%
}{%
Dufresne%
\ \protect \BOthers {.}}{%
{\protect \APACyear {2018}}%
}]{%
dufresne2018age}
\APACinsertmetastar {%
dufresne2018age}%
\begin{APACrefauthors}%
Dufresne, F.%
, Hashorva, E.%
, Ratovomirija, G.%
\BCBL {} Toukourou, Y.%
\end{APACrefauthors}%
\unskip\
\newblock
\APACrefYearMonthDay{2018}{}{}.
\newblock
{\BBOQ}\APACrefatitle {On age difference in joint lifetime modelling with life insurance annuity applications} {On age difference in joint lifetime modelling with life insurance annuity applications}.{\BBCQ}
\newblock
\APACjournalVolNumPages{Annals of Actuarial Science}{12}{2}{350--371,}
\newblock

\newblock

\PrintBackRefs{\CurrentBib}

\bibitem [\protect \citeauthoryear {%
Einmahl%
, Einmahl%
\BCBL {}\ \BBA {} de Haan%
}{%
Einmahl%
\ \protect \BOthers {.}}{%
{\protect \APACyear {2019}}%
}]{%
einmahl2019limits}
\APACinsertmetastar {%
einmahl2019limits}%
\begin{APACrefauthors}%
Einmahl, J.J.%
, Einmahl, J.H.%
\BCBL {} de Haan, L.%
\end{APACrefauthors}%
\unskip\
\newblock
\APACrefYearMonthDay{2019}{}{}.
\newblock
{\BBOQ}\APACrefatitle {Limits to human life span through extreme value theory} {Limits to human life span through extreme value theory}.{\BBCQ}
\newblock
\APACjournalVolNumPages{Journal of the American Statistical Association}{114}{527}{1075--1080,}
\newblock

\newblock

\PrintBackRefs{\CurrentBib}

\bibitem [\protect \citeauthoryear {%
Embrechts%
, Puccetti%
\BCBL {}\ \BBA {} R{\"u}schendorf%
}{%
Embrechts%
\ \protect \BOthers {.}}{%
{\protect \APACyear {2013}}%
}]{%
embrechts2013model}
\APACinsertmetastar {%
embrechts2013model}%
\begin{APACrefauthors}%
Embrechts, P.%
, Puccetti, G.%
\BCBL {} R{\"u}schendorf, L.%
\end{APACrefauthors}%
\unskip\
\newblock
\APACrefYearMonthDay{2013}{}{}.
\newblock
{\BBOQ}\APACrefatitle {Model uncertainty and VaR aggregation} {Model uncertainty and var aggregation}.{\BBCQ}
\newblock
\APACjournalVolNumPages{Journal of Banking \& Finance}{37}{8}{2750--2764,}
\newblock

\newblock

\PrintBackRefs{\CurrentBib}

\bibitem [\protect \citeauthoryear {%
Frees%
, Carriere%
\BCBL {}\ \BBA {} Valdez%
}{%
Frees%
\ \protect \BOthers {.}}{%
{\protect \APACyear {1996}}%
}]{%
frees1996annuity}
\APACinsertmetastar {%
frees1996annuity}%
\begin{APACrefauthors}%
Frees, E.W.%
, Carriere, J.%
\BCBL {} Valdez, E.%
\end{APACrefauthors}%
\unskip\
\newblock
\APACrefYearMonthDay{1996}{}{}.
\newblock
{\BBOQ}\APACrefatitle {Annuity valuation with dependent mortality.} {Annuity valuation with dependent mortality.}{\BBCQ}
\newblock
\APACjournalVolNumPages{Journal of risk and insurance}{63}{2}{229--261,}
\newblock

\newblock

\PrintBackRefs{\CurrentBib}

\bibitem [\protect \citeauthoryear {%
Genest%
, Molina%
, Lallena%
\BCBL {}\ \BBA {} Sempi%
}{%
Genest%
\ \protect \BOthers {.}}{%
{\protect \APACyear {1999}}%
}]{%
genest1999characterization}
\APACinsertmetastar {%
genest1999characterization}%
\begin{APACrefauthors}%
Genest, C.%
, Molina, J.Q.%
, Lallena, J.R.%
\BCBL {} Sempi, C.%
\end{APACrefauthors}%
\unskip\
\newblock
\APACrefYearMonthDay{1999}{}{}.
\newblock
{\BBOQ}\APACrefatitle {A characterization of quasi-copulas} {A characterization of quasi-copulas}.{\BBCQ}
\newblock
\APACjournalVolNumPages{Journal of Multivariate Analysis}{69}{2}{193--205,}
\newblock

\newblock

\PrintBackRefs{\CurrentBib}

\bibitem [\protect \citeauthoryear {%
Gobbi%
, Kolev%
\BCBL {}\ \BBA {} Mulinacci%
}{%
Gobbi%
\ \protect \BOthers {.}}{%
{\protect \APACyear {2019}}%
}]{%
gobbi2019joint}
\APACinsertmetastar {%
gobbi2019joint}%
\begin{APACrefauthors}%
Gobbi, F.%
, Kolev, N.%
\BCBL {} Mulinacci, S.%
\end{APACrefauthors}%
\unskip\
\newblock
\APACrefYearMonthDay{2019}{}{}.
\newblock
{\BBOQ}\APACrefatitle {Joint life insurance pricing using extended Marshall--Olkin models} {Joint life insurance pricing using extended marshall--olkin models}.{\BBCQ}
\newblock
\APACjournalVolNumPages{ASTIN Bulletin: The Journal of the IAA}{49}{2}{409--432,}
\newblock

\newblock

\PrintBackRefs{\CurrentBib}

\bibitem [\protect \citeauthoryear {%
Gotoh%
\ \BBA {} Uryasev%
}{%
Gotoh%
\ \BBA {} Uryasev%
}{%
{\protect \APACyear {2016}}%
}]{%
gotoh2016two}
\APACinsertmetastar {%
gotoh2016two}%
\begin{APACrefauthors}%
Gotoh, J.%
\BCBT {}\ \BBA {} Uryasev, S.%
\end{APACrefauthors}%
\unskip\
\newblock
\APACrefYearMonthDay{2016}{}{}.
\newblock
{\BBOQ}\APACrefatitle {Two pairs of families of polyhedral norms versus $\ell_p$-norms: proximity and applications in optimization} {Two pairs of families of polyhedral norms versus $\ell_p$-norms: proximity and applications in optimization}.{\BBCQ}
\newblock
\APACjournalVolNumPages{Mathematical Programming}{156}{1}{391--431,}
\newblock

\newblock

\PrintBackRefs{\CurrentBib}

\bibitem [\protect \citeauthoryear {%
Hougaard%
}{%
Hougaard%
}{%
{\protect \APACyear {2000}}%
}]{%
hougaard2000analysis}
\APACinsertmetastar {%
hougaard2000analysis}%
\begin{APACrefauthors}%
Hougaard, P.%
\end{APACrefauthors}%
\unskip\
\newblock
\APACrefYear{2000}.
\newblock
\APACrefbtitle {Analysis of Multivariate Survival Data} {Analysis of multivariate survival data}.
\newblock
\APACaddressPublisher{New York, NY}{Springer}.
\PrintBackRefs{\CurrentBib}

\bibitem [\protect \citeauthoryear {%
Hua%
\ \BBA {} Joe%
}{%
Hua%
\ \BBA {} Joe%
}{%
{\protect \APACyear {2011}}%
}]{%
hua2011tail}
\APACinsertmetastar {%
hua2011tail}%
\begin{APACrefauthors}%
Hua, L.%
\BCBT {}\ \BBA {} Joe, H.%
\end{APACrefauthors}%
\unskip\
\newblock
\APACrefYearMonthDay{2011}{}{}.
\newblock
{\BBOQ}\APACrefatitle {Tail order and intermediate tail dependence of multivariate copulas} {Tail order and intermediate tail dependence of multivariate copulas}.{\BBCQ}
\newblock
\APACjournalVolNumPages{Journal of Multivariate Analysis}{102}{10}{1454--1471,}
\newblock

\newblock

\PrintBackRefs{\CurrentBib}

\bibitem [\protect \citeauthoryear {%
Joe%
}{%
Joe%
}{%
{\protect \APACyear {2014}}%
}]{%
joe2014dependence}
\APACinsertmetastar {%
joe2014dependence}%
\begin{APACrefauthors}%
Joe, H.%
\end{APACrefauthors}%
\unskip\
\newblock
\APACrefYear{2014}.
\newblock
\APACrefbtitle {Dependence modeling with copulas} {Dependence modeling with copulas}.
\newblock
\APACaddressPublisher{Florida}{CRC Press}.
\PrintBackRefs{\CurrentBib}

\bibitem [\protect \citeauthoryear {%
Kaas%
, Laeven%
\BCBL {}\ \BBA {} Nelsen%
}{%
Kaas%
\ \protect \BOthers {.}}{%
{\protect \APACyear {2009}}%
}]{%
kaas2009worst}
\APACinsertmetastar {%
kaas2009worst}%
\begin{APACrefauthors}%
Kaas, R.%
, Laeven, R.J.%
\BCBL {} Nelsen, R.B.%
\end{APACrefauthors}%
\unskip\
\newblock
\APACrefYearMonthDay{2009}{}{}.
\newblock
{\BBOQ}\APACrefatitle {Worst VaR scenarios with given marginals and measures of association} {Worst var scenarios with given marginals and measures of association}.{\BBCQ}
\newblock
\APACjournalVolNumPages{Insurance: Mathematics and Economics}{44}{2}{146--158,}
\newblock

\newblock

\PrintBackRefs{\CurrentBib}

\bibitem [\protect \citeauthoryear {%
Koike%
, Kato%
\BCBL {}\ \BBA {} Hofert%
}{%
Koike%
\ \protect \BOthers {.}}{%
{\protect \APACyear {2023}}%
}]{%
koike2023measuring}
\APACinsertmetastar {%
koike2023measuring}%
\begin{APACrefauthors}%
Koike, T.%
, Kato, S.%
\BCBL {} Hofert, M.%
\end{APACrefauthors}%
\unskip\
\newblock
\APACrefYearMonthDay{2023}{}{}.
\newblock
{\BBOQ}\APACrefatitle {Measuring non-exchangeable tail dependence using tail copulas} {Measuring non-exchangeable tail dependence using tail copulas}.{\BBCQ}
\newblock
\APACjournalVolNumPages{ASTIN Bulletin: The Journal of the IAA}{53}{2}{466--487,}
\newblock

\newblock

\PrintBackRefs{\CurrentBib}

\bibitem [\protect \citeauthoryear {%
Lai%
\ \BBA {} Xie%
}{%
Lai%
\ \BBA {} Xie%
}{%
{\protect \APACyear {2006}}%
}]{%
lai2006stochastic}
\APACinsertmetastar {%
lai2006stochastic}%
\begin{APACrefauthors}%
Lai, C.D.%
\BCBT {}\ \BBA {} Xie, M.%
\end{APACrefauthors}%
\unskip\
\newblock
\APACrefYear{2006}.
\newblock
\APACrefbtitle {Stochastic ageing and dependence for reliability} {Stochastic ageing and dependence for reliability}.
\newblock
\APACaddressPublisher{}{Springer Science \& Business Media}.
\PrintBackRefs{\CurrentBib}

\bibitem [\protect \citeauthoryear {%
Liu%
\ \BBA {} Wang%
}{%
Liu%
\ \BBA {} Wang%
}{%
{\protect \APACyear {2017}}%
}]{%
liu2017collective}
\APACinsertmetastar {%
liu2017collective}%
\begin{APACrefauthors}%
Liu, H.%
\BCBT {}\ \BBA {} Wang, R.%
\end{APACrefauthors}%
\unskip\
\newblock
\APACrefYearMonthDay{2017}{}{}.
\newblock
{\BBOQ}\APACrefatitle {Collective risk models with dependence uncertainty} {Collective risk models with dependence uncertainty}.{\BBCQ}
\newblock
\APACjournalVolNumPages{ASTIN Bulletin: The Journal of the IAA}{47}{2}{361--389,}
\newblock

\newblock

\PrintBackRefs{\CurrentBib}

\bibitem [\protect \citeauthoryear {%
Lu%
}{%
Lu%
}{%
{\protect \APACyear {2017}}%
}]{%
lu2017broken}
\APACinsertmetastar {%
lu2017broken}%
\begin{APACrefauthors}%
Lu, Y.%
\end{APACrefauthors}%
\unskip\
\newblock
\APACrefYearMonthDay{2017}{}{}.
\newblock
{\BBOQ}\APACrefatitle {Broken-heart, common life, heterogeneity: Analyzing the spousal mortality dependence} {Broken-heart, common life, heterogeneity: Analyzing the spousal mortality dependence}.{\BBCQ}
\newblock
\APACjournalVolNumPages{ASTIN Bulletin: the Journal of the IAA}{47}{3}{837--874,}
\newblock

\newblock

\PrintBackRefs{\CurrentBib}

\bibitem [\protect \citeauthoryear {%
Mardani-Fard%
, Sadooghi-Alvandi%
\BCBL {}\ \BBA {} Shishebor%
}{%
Mardani-Fard%
\ \protect \BOthers {.}}{%
{\protect \APACyear {2010}}%
}]{%
mardani2010bounds}
\APACinsertmetastar {%
mardani2010bounds}%
\begin{APACrefauthors}%
Mardani-Fard, H.%
, Sadooghi-Alvandi, S.%
\BCBL {} Shishebor, Z.%
\end{APACrefauthors}%
\unskip\
\newblock
\APACrefYearMonthDay{2010}{}{}.
\newblock
{\BBOQ}\APACrefatitle {Bounds on bivariate distribution functions with given margins and known values at several points} {Bounds on bivariate distribution functions with given margins and known values at several points}.{\BBCQ}
\newblock
\APACjournalVolNumPages{Communications in Statistics—Theory and Methods}{39}{20}{3596--3621,}
\newblock

\newblock

\PrintBackRefs{\CurrentBib}

\bibitem [\protect \citeauthoryear {%
Nelsen%
}{%
Nelsen%
}{%
{\protect \APACyear {2006}}%
}]{%
nelsen2006introduction}
\APACinsertmetastar {%
nelsen2006introduction}%
\begin{APACrefauthors}%
Nelsen, R.B.%
\end{APACrefauthors}%
\unskip\
\newblock
\APACrefYear{2006}.
\newblock
\APACrefbtitle {An Introduction to Copulas} {An introduction to copulas}.
\newblock
\APACaddressPublisher{New York}{Springer}.
\PrintBackRefs{\CurrentBib}

\bibitem [\protect \citeauthoryear {%
Nelsen%
, Quesada-Molina%
, Rodri{\'\i}guez-Lallena%
\BCBL {}\ \BBA {} {\'U}beda-Flores%
}{%
Nelsen%
\ \protect \BOthers {.}}{%
{\protect \APACyear {2001}}%
}]{%
nelsen2001bounds}
\APACinsertmetastar {%
nelsen2001bounds}%
\begin{APACrefauthors}%
Nelsen, R.B.%
, Quesada-Molina, J.J.%
, Rodri{\'\i}guez-Lallena, J.A.%
\BCBL {} {\'U}beda-Flores, M.%
\end{APACrefauthors}%
\unskip\
\newblock
\APACrefYearMonthDay{2001}{}{}.
\newblock
{\BBOQ}\APACrefatitle {Bounds on bivariate distribution functions with given margins and measures of association} {Bounds on bivariate distribution functions with given margins and measures of association}.{\BBCQ}
\newblock
\APACjournalVolNumPages{Communications in Statistics-Theory and Methods}{30}{6}{1055--1062,}
\newblock

\newblock

\PrintBackRefs{\CurrentBib}

\bibitem [\protect \citeauthoryear {%
Parkes%
, Benjamin%
\BCBL {}\ \BBA {} Fitzgerald%
}{%
Parkes%
\ \protect \BOthers {.}}{%
{\protect \APACyear {1969}}%
}]{%
parkes1969broken}
\APACinsertmetastar {%
parkes1969broken}%
\begin{APACrefauthors}%
Parkes, C.M.%
, Benjamin, B.%
\BCBL {} Fitzgerald, R.G.%
\end{APACrefauthors}%
\unskip\
\newblock
\APACrefYearMonthDay{1969}{}{}.
\newblock
{\BBOQ}\APACrefatitle {Broken heart: a statistical study of increased mortality among widowers} {Broken heart: a statistical study of increased mortality among widowers}.{\BBCQ}
\newblock
\APACjournalVolNumPages{British Medical Journal}{1}{5646}{740--743,}
\newblock

\newblock

\PrintBackRefs{\CurrentBib}

\bibitem [\protect \citeauthoryear {%
Pichler%
}{%
Pichler%
}{%
{\protect \APACyear {2014}}%
}]{%
pichler2014insurance}
\APACinsertmetastar {%
pichler2014insurance}%
\begin{APACrefauthors}%
Pichler, A.%
\end{APACrefauthors}%
\unskip\
\newblock
\APACrefYearMonthDay{2014}{}{}.
\newblock
{\BBOQ}\APACrefatitle {Insurance pricing under ambiguity} {Insurance pricing under ambiguity}.{\BBCQ}
\newblock
\APACjournalVolNumPages{European Actuarial Journal}{4}{}{335--364,}
\newblock

\newblock

\PrintBackRefs{\CurrentBib}

\bibitem [\protect \citeauthoryear {%
Rudin%
}{%
Rudin%
}{%
{\protect \APACyear {1976}}%
}]{%
Rudin1976}
\APACinsertmetastar {%
Rudin1976}%
\begin{APACrefauthors}%
Rudin, W.%
\end{APACrefauthors}%
\unskip\
\newblock
\APACrefYear{1976}.
\newblock
\APACrefbtitle {Principles of Mathematical Analysis} {Principles of mathematical analysis}\ (\PrintOrdinal{3rd}\ \BEd).
\newblock
\APACaddressPublisher{New York}{McGraw-Hill}.
\PrintBackRefs{\CurrentBib}

\bibitem [\protect \citeauthoryear {%
R{\"u}schendorf%
, Vanduffel%
\BCBL {}\ \BBA {} Bernard%
}{%
R{\"u}schendorf%
\ \protect \BOthers {.}}{%
{\protect \APACyear {2024}}%
}]{%
ruschendorf2024model}
\APACinsertmetastar {%
ruschendorf2024model}%
\begin{APACrefauthors}%
R{\"u}schendorf, L.%
, Vanduffel, S.%
\BCBL {} Bernard, C.%
\end{APACrefauthors}%
\unskip\
\newblock
\APACrefYear{2024}.
\newblock
\APACrefbtitle {Model Risk Management: Risk Bounds Under Uncertainty} {Model risk management: Risk bounds under uncertainty}.
\newblock
\APACaddressPublisher{}{Cambridge University Press}.
\PrintBackRefs{\CurrentBib}

\bibitem [\protect \citeauthoryear {%
Sadooghi-Alvandi%
, Shishebor%
\BCBL {}\ \BBA {} Mardani-Fard%
}{%
Sadooghi-Alvandi%
\ \protect \BOthers {.}}{%
{\protect \APACyear {2013}}%
}]{%
sadooghi2013sharp}
\APACinsertmetastar {%
sadooghi2013sharp}%
\begin{APACrefauthors}%
Sadooghi-Alvandi, S.%
, Shishebor, Z.%
\BCBL {} Mardani-Fard, H.%
\end{APACrefauthors}%
\unskip\
\newblock
\APACrefYearMonthDay{2013}{}{}.
\newblock
{\BBOQ}\APACrefatitle {Sharp bounds on a class of copulas with known values at several points} {Sharp bounds on a class of copulas with known values at several points}.{\BBCQ}
\newblock
\APACjournalVolNumPages{Communications in Statistics-Theory and Methods}{42}{12}{2215--2228,}
\newblock

\newblock

\PrintBackRefs{\CurrentBib}

\bibitem [\protect \citeauthoryear {%
Shemyakin%
\ \BBA {} Youn%
}{%
Shemyakin%
\ \BBA {} Youn%
}{%
{\protect \APACyear {2006}}%
}]{%
shemyakin2006copula}
\APACinsertmetastar {%
shemyakin2006copula}%
\begin{APACrefauthors}%
Shemyakin, E.A.%
\BCBT {}\ \BBA {} Youn, H.%
\end{APACrefauthors}%
\unskip\
\newblock
\APACrefYearMonthDay{2006}{}{}.
\newblock
{\BBOQ}\APACrefatitle {Copula models of joint last survivor analysis} {Copula models of joint last survivor analysis}.{\BBCQ}
\newblock
\APACjournalVolNumPages{Applied Stochastic Models in Business and Industry}{22}{2}{211--224,}
\newblock

\newblock

\PrintBackRefs{\CurrentBib}

\bibitem [\protect \citeauthoryear {%
Spreeuw%
}{%
Spreeuw%
}{%
{\protect \APACyear {2006}}%
}]{%
spreeuw2006types}
\APACinsertmetastar {%
spreeuw2006types}%
\begin{APACrefauthors}%
Spreeuw, J.%
\end{APACrefauthors}%
\unskip\
\newblock
\APACrefYearMonthDay{2006}{}{}.
\newblock
{\BBOQ}\APACrefatitle {Types of dependence and time-dependent association between two lifetimes in single parameter copula models} {Types of dependence and time-dependent association between two lifetimes in single parameter copula models}.{\BBCQ}
\newblock
\APACjournalVolNumPages{Scandinavian Actuarial Journal}{2006}{5}{286--309,}
\newblock

\newblock

\PrintBackRefs{\CurrentBib}

\bibitem [\protect \citeauthoryear {%
Spreeuw%
\ \BBA {} Owadally%
}{%
Spreeuw%
\ \BBA {} Owadally%
}{%
{\protect \APACyear {2013}}%
}]{%
spreeuw2013investigating}
\APACinsertmetastar {%
spreeuw2013investigating}%
\begin{APACrefauthors}%
Spreeuw, J.%
\BCBT {}\ \BBA {} Owadally, I.%
\end{APACrefauthors}%
\unskip\
\newblock
\APACrefYearMonthDay{2013}{}{}.
\newblock
{\BBOQ}\APACrefatitle {Investigating the broken-heart effect: a model for short-term dependence between the remaining lifetimes of joint lives} {Investigating the broken-heart effect: a model for short-term dependence between the remaining lifetimes of joint lives}.{\BBCQ}
\newblock
\APACjournalVolNumPages{Annals of Actuarial Science}{7}{2}{236--257,}
\newblock

\newblock

\PrintBackRefs{\CurrentBib}

\bibitem [\protect \citeauthoryear {%
Tankov%
}{%
Tankov%
}{%
{\protect \APACyear {2011}}%
}]{%
tankov2011improved}
\APACinsertmetastar {%
tankov2011improved}%
\begin{APACrefauthors}%
Tankov, P.%
\end{APACrefauthors}%
\unskip\
\newblock
\APACrefYearMonthDay{2011}{}{}.
\newblock
{\BBOQ}\APACrefatitle {Improved Fr{\'e}chet bounds and model-free pricing of multi-asset options} {Improved fr{\'e}chet bounds and model-free pricing of multi-asset options}.{\BBCQ}
\newblock
\APACjournalVolNumPages{Journal of Applied Probability}{48}{2}{389--403,}
\newblock

\newblock

\PrintBackRefs{\CurrentBib}

\bibitem [\protect \citeauthoryear {%
Wang%
, Young%
\BCBL {}\ \BBA {} Panjer%
}{%
Wang%
\ \protect \BOthers {.}}{%
{\protect \APACyear {1997}}%
}]{%
wang1997axiomatic}
\APACinsertmetastar {%
wang1997axiomatic}%
\begin{APACrefauthors}%
Wang, S.S.%
, Young, V.R.%
\BCBL {} Panjer, H.H.%
\end{APACrefauthors}%
\unskip\
\newblock
\APACrefYearMonthDay{1997}{}{}.
\newblock
{\BBOQ}\APACrefatitle {Axiomatic characterization of insurance prices} {Axiomatic characterization of insurance prices}.{\BBCQ}
\newblock
\APACjournalVolNumPages{Insurance: Mathematics and economics}{21}{2}{173--183,}
\newblock

\newblock

\PrintBackRefs{\CurrentBib}

\bibitem [\protect \citeauthoryear {%
Youn%
\ \BBA {} Shemyakin%
}{%
Youn%
\ \BBA {} Shemyakin%
}{%
{\protect \APACyear {1999}}%
}]{%
youn1999statistical}
\APACinsertmetastar {%
youn1999statistical}%
\begin{APACrefauthors}%
Youn, H.%
\BCBT {}\ \BBA {} Shemyakin, A.%
\end{APACrefauthors}%
\unskip\
\newblock
\APACrefYearMonthDay{1999}{}{}.
\newblock
{\BBOQ}\APACrefatitle {Statistical aspects of joint life insurance pricing} {Statistical aspects of joint life insurance pricing}.{\BBCQ}
\newblock
 \APACrefbtitle {1999 Proceedings of the Business and Statistics Section of the American Statistical Association} {1999 proceedings of the business and statistics section of the american statistical association}\ (\BPGS\ 34--38).
\newblock
\APACaddressPublisher{}{American Statistical Association Washington, DC}.
\PrintBackRefs{\CurrentBib}

\bibitem [\protect \citeauthoryear {%
Youn%
\ \BBA {} Shemyakin%
}{%
Youn%
\ \BBA {} Shemyakin%
}{%
{\protect \APACyear {2001}}%
}]{%
youn2001pricing}
\APACinsertmetastar {%
youn2001pricing}%
\begin{APACrefauthors}%
Youn, H.%
\BCBT {}\ \BBA {} Shemyakin, A.%
\end{APACrefauthors}%
\unskip\
\newblock
\APACrefYearMonthDay{2001}{}{}.
\newblock
{\BBOQ}\APACrefatitle {Pricing practices for joint last survivor insurance} {Pricing practices for joint last survivor insurance}.{\BBCQ}
\newblock
\APACjournalVolNumPages{Actuarial Research Clearing House}{1}{}{1--13,}
\newblock

\newblock

\PrintBackRefs{\CurrentBib}

\end{thebibliography}

\end{document}